\newtheorem{theorem}{Theorem}[section]
\newtheorem{definition}{Definition}
\newtheorem{example}{Example}
\newcommand{\LCOMMENT}[1]{\item[] \textit{#1}}
\begin{document}

\title{Redundancy-Driven Top-$k$ Functional Dependency Discovery}

\author{Xiaolong~Wan, Xixian~Han
\thanks{The authors are with School of Computer Science and Technology, Harbin Institute of Technology, China. (e-mail: wxl@hit.edu.cn, hanxx@hit.edu.cn)}
\thanks{Manuscript received XX XX, 202X; revised XX XX, 202X.}}



\maketitle

\begin{abstract}

Functional dependencies (FDs) are basic constraints in relational databases and are used for many data management tasks. Most FD discovery algorithms find all valid dependencies, but this causes two problems. First, the computational cost is prohibitive: computational complexity grows quadratically with the number of tuples and exponentially with the number of attributes, making discovery slow on large-scale and high-dimensional data. Second, the result set can be huge, making it hard to identify useful dependencies. We propose SDP (Selective-Discovery-and-Prune), which discovers the top-$k$ FDs ranked by redundancy count. Redundancy count measures how much duplicated information an FD explains and connects directly to storage overhead and update anomalies. SDP uses an upper bound on redundancy to prune the search space. It is proved that this upper bound is monotone: adding attributes refines partitions and thus decreases the bound. Once the bound falls below the top-$k$ threshold, the entire branch can be skipped. We improve SDP with three optimizations: ordering attributes by partition cardinality, using pairwise statistics in a Partition Cardinality Matrix to tighten bounds, and a global scheduler to explore promising branches first. Experiments on over 40 datasets show that SDP is much faster and uses less memory than exhaustive methods.

\end{abstract}

\begin{IEEEkeywords}
Functional dependency, top-$k$ discovery, data redundancy, pruning strategy.
\end{IEEEkeywords}

\section{Introduction}

\IEEEPARstart{F}{unctional} dependencies (FDs) are basic constraints in the relational databases~\cite{DBLP:books/mg/SKS20}. An FD $X \rightarrow A$ states that the values of $X$ determine the value of $A$, and for any two tuples, if they agree on $X$ then they must also agree on $A$, where $X$ is the left-hand side (LHS) and $A$ is the right-hand side (RHS). FDs have many practical applications in data management, including schema normalization~\cite{DBLP:journals/is/KohlerL18}, data integration~\cite{DBLP:journals/csur/MasmoudiLKAZ24}, data cleaning~\cite{DBLP:journals/jdiq/BoecklingB25}, and integrity maintenance~\cite{DBLP:journals/vldb/ZhangL25}.

Given their theoretical and practical significance, researchers have developed many algorithms to discover all valid FDs~\cite{DBLP:journals/pacmmod/BleifussPBSN24,DBLP:conf/sigmod/PapenbrockN16,DBLP:journals/pvldb/PapenbrockEMNRZ15,DBLP:journals/tkde/WanHWL24}. But even with these efforts, FD discovery remains difficult to scale, whose computational complexity is quadratic in the number of tuples and exponential in the number of attributes~\cite{DBLP:journals/tkde/LiuLLC12}. As a result, FD discovery is computationally prohibitive on large-scale and high-dimensional data~\cite{DBLP:journals/tcs/BlasiusFS22}. More importantly, even when computation is feasible, the \emph{result set size} itself often becomes a bottleneck. The number of valid FDs can grow exponentially with respect to the number of attributes. For instance, on the \textit{Flight} dataset with 1,000 tuples and 109 attributes, over 980,000 FDs are discovered~\cite{DBLP:journals/pvldb/PapenbrockEMNRZ15}. Such a massive output overwhelms users and makes it difficult to identify the truly valuable FDs for downstream tasks.

\begin{figure}[t]
	\centering
	\begin{minipage}{0.45\linewidth}
		\centering
		\scriptsize
		\renewcommand{\arraystretch}{1.1}
		\textbf{(a) Relation Instance $r$}
		\vspace{2pt}
		
		\begin{tabular}{|c|c|c|c|}
			\hline
			\textbf{ID} & \textbf{FlightNo} & \textbf{Origin} & \textbf{Dest} \\ \hline
			$t_1$ & UA57 & SFO & JFK \\
			$t_2$ & UA57 & SFO & JFK \\
			$t_3$ & UA57 & SFO & JFK \\
			$t_4$ & AA20 & LAX & ORD \\
			$t_5$ & AA20 & LAX & ORD \\
			\hline
		\end{tabular}
	\end{minipage}
	\hfill
	\begin{minipage}{0.52\linewidth}
		\centering
		\scriptsize
		\textbf{(b) Redundancy Analysis}
		\vspace{5pt}
		
		\textbf{FD 1: ID $\rightarrow$ Origin} \\
		\textit{Grouping by LHS (ID):} \\
		$\{t_1\}, \{t_2\}, \{t_3\}, \{t_4\}, \{t_5\}$ \\
		\textbf{Redundancy:} 0 (No repeated values)
		\vspace{8pt}
		
		\textbf{FD 2: FlightNo $\rightarrow$ Origin} \\
		\textit{Grouping by LHS (FlightNo):} \\
		$\{t_1,t_2,t_3\}$ \quad $\xrightarrow{yields}$ \textbf{2} redundant \\
		$\{t_4,t_5\}$ \quad\quad $\xrightarrow{yields}$ \textbf{1} redundant \\
		\textbf{Total Redundancy:} $2 + 1 = \mathbf{3}$
	\end{minipage}
	\vspace{-2pt}
	\caption{A motivating example for redundancy of FDs}
	\label{fig:motivating_example}
\end{figure}

This motivates moving beyond exhaustive FD discovery toward top-$k$ paradigm that ranks dependencies by their practical impact. A natural criterion is \emph{data redundancy}. For an FD $X \rightarrow A$, the \textit{redundancy count} measures how many repeated values in $A$ are implied by grouping tuples on $X$, and thus how much duplicated information the dependency explains~\cite{DBLP:journals/acta/Vincent99,DBLP:journals/is/WeiL23}. In contrast to treating FDs purely as integrity constraints, redundancy count connects an FD to concrete consequences such as storage overhead and update anomalies~\cite{DBLP:journals/jacm/ArenasL05}. Therefore, prioritizing FDs with large redundancy counts directs the discovery process toward dependencies that are more likely to be useful for normalization and data cleaning.

\begin{example}
	Figure~\ref{fig:motivating_example} shows a simple instance $r$ where both ID $\rightarrow$ Origin and FlightNo $\rightarrow$ Origin hold. The first dependency is induced by a unique identifier. Since ID is a key, ID $\rightarrow$ Origin carries a redundancy count of 0 and offers little beyond what the key already implies. In contrast, FlightNo $\rightarrow$ Origin reflects a recurring regularity in the data, i.e., tuples sharing the same flight number also share the same origin. So its redundancy count is 3 in the motivating example. This example illustrates why redundancy is useful for filtering. It highlights dependencies that capture repeated structure likely to matter for normalization and cleaning.
\end{example}

To solve these problems, we focus on \textit{top-$k$ FD discovery}. Instead of finding all valid FDs, we aim to return only the $k$ dependencies with the highest redundancy counts. Existing studies mainly focus on how to perform exhaustive discovery efficiently, or consider redundancy as conceptual ranking metric for relevance criterion~\cite{DBLP:journals/is/WeiL23}. They do not explain how to use redundancy to actually guide the discovery process. Although the broader data profiling literature, such as finding meaningful keys~\cite{DBLP:journals/pvldb/KoehlerL25} or entity-enhancing rules~\cite{DBLP:journals/pacmmod/FanHWX23,DBLP:journals/pacmmod/FanHXZ24}, has increasingly moved toward selective discovery, a method that directly integrates redundancy into the FD discovery process is still missing.

Filling this gap is not just a matter of adding a ranking step. If we first enumerate all valid FDs and then sort them, we end up paying the full cost of exhaustive discovery, which is the very cost we are trying to avoid. The key is to use the redundancy metric during discovery process so that low-potential regions of the lattice can be pruned early. This is not straightforward, because redundancy count is a data-dependent statistic. It is determined by how tuples are grouped under $X$ and therefore cannot be inferred from the schema alone. In practice, computing it reliably requires checking partitions and thus performing expensive validation, which makes it hard to turn redundancy into a cheap pruning test while enumerating candidates.

To address this challenge, we introduce SDP algorithm (\textit{S}elective-\textit{D}iscovery-and-\textit{P}rune), utilizing hypergraph-traversal-based enumeration framework. Rather than enumerating every valid FD, SDP is designed to search for high-redundancy candidates first and cut off unpromising regions early. The main observation is simple. When we extend the LHS by adding more attributes, the partition of tuples becomes finer, i.e., the number of equivalence classes does not decrease, so the redundancy that an FD can possibly achieve cannot increase. This gives a monotone upper bound on the redundancy of FD candidates in any subtree of the set-enumeration lattice. SDP uses this bound together with the current top-$k$ threshold $\tau$ to perform pruning operation. Once upper bound of a node falls below $\tau$, none of its descendants can enter the top-$k$ results, and the entire branch can be skipped. As a result, SDP avoids validating a large fraction of low-potential candidates while still returning the top-$k$ FDs.

We further speed up SDP with three optimizations. First, we order attributes by partition cardinality so that the search reaches high-redundancy candidates early and raises the threshold $\tau$ quickly. Second, we make the estimated upper bound tighter by using pairwise partition statistics stored in a Partition Cardinality Matrix (PCM), which captures interactions that single-attribute counts miss. Third, instead of exploring RHS trees one by one, we use a global scheduler that interleaves them and always advances the currently most promising branch, accelerating the convergence of $\tau$ in practice.

In summary, we make the following contributions.
\begin{itemize}
	\item \textit{Problem.} We study top-$k$ FD discovery under redundancy count, turning a relevance measure into a concrete retrieval task.
	\item \textit{Algorithm.} We propose SDP, an exact top-$k$ algorithm that uses a monotone upper bound to prune the search space within the hypergraph-transversal-based enumeration framework.
	\item \textit{Optimizations.} We add three practical improvements to reduce the number of explored nodes and validations.
	\item \textit{Evaluation.} Experiments on more than 40 real datasets show that SDP is substantially faster and more memory-efficient than exhaustive baselines.
\end{itemize}

The rest of this paper is organized as follows. Section~\ref{sec:preliminary} provides the necessary preliminaries, followed by a survey of related works in Section~\ref{sec:relatedworks}. Section~\ref{sec:ba} develops the baseline algorithm. The core SDP algorithm and its three optimization strategies are introduced in Section~\ref{sec:sdp} and Section~\ref{sec:optimization}, respectively. Section~\ref{sec:performance} presents a comprehensive experimental evaluation. Finally, Section~\ref{sec:conclusion} concludes the paper.

\section{Preliminaries} \label{sec:preliminary}

This section formalizes the concepts of functional dependencies and redundancy count, and subsequently defines the top-$k$ discovery problem addressed in this paper.

Let $R = \{A_1, A_2, \ldots, A_m\}$ be a relation schema consisting of a set of attributes. A relation instance $r$ over $R$ is a finite set of $n$ tuples, where each tuple is an ordered list of values corresponding to attributes in $R$. For any tuple $t \in r$ and attribute $A \in R$, $t[A]$ denotes the value of $A$ in $t$. Similarly, for a subset of attributes $X \subseteq R$, $t[X]$ denotes the projection of $t$ onto $X$.

\subsection{Functional Dependencies and Redundancy}

A functional dependency (FD) captures a constraint between attribute sets.

\begin{definition}[Functional Dependency]
	Given attribute set $X \subseteq R$ and an attribute $A \in R$, a functional dependency, denoted as $X \rightarrow A$, holds in a relation $r$ if and only if for all pairs of tuples $t_1, t_2 \in r$, $t_1[X] = t_2[X]$ implies $t_1[A] = t_2[A]$.
\end{definition}

Here, $X$ is termed the left-hand side (LHS) and $A$ the right-hand side (RHS). An FD $X \rightarrow A$ is \textit{trivial} if $A \in X$. It is \textit{minimal} if no proper subset $Y \subset X$ satisfies $Y \rightarrow A$. In this work, we focus on identifying non-trivial and minimal FDs.

To quantify the utility of an FD, we utilize the concept of equivalence classes. Given $r$ and $X \subseteq R$, the equivalence class of a tuple $t$ with respect to $X$, denoted by $[t]_X$, is defined as $[t]_X = \{ u \in r \mid u[X] = t[X] \}$. The set of all such equivalence classes constitutes the partition of $r$ by $X$, denoted as $\pi_X = \{ [t]_X \mid t \in r \}$. We use \textit{redundancy count} as the utility objective for top-$k$ FD discovery, and compute it efficiently from partitions induced by the LHS attributes.

\begin{definition}[Redundancy Count]
	For a valid functional dependency $X \rightarrow A$ in relation $r$, the redundancy count, denoted as $red(X \rightarrow A)$, is defined as the summation of redundant occurrences across all equivalence classes:
	\begin{equation*} \label{eq:redundancy}
		red(X \rightarrow A) = \sum_{[t]_X \in \pi_X} (|[t]_X| - 1) = |r| - |\pi_X|.
	\end{equation*}
\end{definition}

Intuitively, within an equivalence class of size $s$, since the LHS values are identical, the FD mandates that the RHS values must also be identical. Thus, $s-1$ values are technically redundant repetitions. A higher redundancy count implies that the FD governs a larger portion of the structural consistency of the data, making it a natural proxy for relevance. Note that $red(X \rightarrow A)$ is determined solely by the LHS partition $\pi_X$ and the total number of tuples $|r|$, provided the FD holds.

\subsection{Problem Statement}

Classical FD discovery focuses on the exhaustive enumeration of all minimal and non-trivial FDs. However, this often yields a huge number of results with varying utility. To address this, we formulate the top-$k$ discovery problem.

\begin{definition}[Top-$k$ FD Discovery]
	Given a relation instance $r$ and a positive integer $k$, let $\mathcal{F}$ be the set of all valid minimal and non-trivial FDs in $r$, top-$k$ FD discovery problem aims to identify a subset $\mathcal{F}_k \subseteq \mathcal{F}$ of size at most $k$, such that $\forall f \in \mathcal{F}_k$ and $\forall f' \in \mathcal{F} \setminus \mathcal{F}_k$: $red(f) \ge red(f')$.
\end{definition}

We strictly consider FDs with positive redundancy. Dependencies with zero redundancy (e.g., key dependencies where $|[t]_X|=1$ for all classes) are excluded as they contribute no data reduction potential. This formulation allows us to focus computational resources on discovering the most impactful knowledge.

\section{Related Work} \label{sec:relatedworks}

\subsection{Functional Dependency Discovery}

The primary objective of classical functional dependency discovery is to identify the \textit{complete} set of minimal and non-trivial FDs that hold on a given relation instance. While existing algorithms differ in their candidate generation and validation mechanisms, they universally adhere to this paradigm of exhaustive discovery. These algorithms can be categorized into three types: attribute-oriented, tuple-oriented, and hybrid approaches.

\textit{Attribute-oriented approaches.} A large body of work discovers FDs by enumerating candidates in the attribute lattice and pruning whenever possible. TANE~\cite{DBLP:journals/cj/HuhtalaKPT99} is the standard level-wise approach, which traverses the lattice level by level, checks candidates using stripped partitions, and removes many candidates using inference rules. FUN~\cite{DBLP:journals/is/NovelliC01} and FD\_Mine~\cite{DBLP:journals/datamine/YaoH08} keep the same enumeration scheme but introduce additional pruning conditions. DFD~\cite{DBLP:conf/cikm/AbedjanSN14} instead separates the search by RHS and explores one lattice per RHS attribute using a randomized depth-first walk. There are also variants that rely on hashing~\cite{DBLP:journals/dke/LiuYLW13}, where tuples with the same LHS values go into the same bucket to speed up FD checks. These approaches work well on moderate schemas, but when $|R|$ is large the number of lattice nodes dominates, and both enumeration and validation become expensive.

\textit{Tuple-oriented approaches.} Tuple-oriented methods take a different route. They derive FDs from agreements and disagreements observed across tuple pairs. Early systems such as FDEP~\cite{DBLP:journals/aicom/FlachS99} and Dep-Miner~\cite{DBLP:conf/edbt/LopesPL00} rely on extensive pairwise comparisons, which quickly becomes expensive as $n$ grows. FastFDs~\cite{DBLP:conf/dawak/WyssGR01} reduces the inspection cost by searching over difference sets in a heuristic depth-first manner, avoiding a full traversal of all pairs. More recently, FSC~\cite{DBLP:journals/tkde/WanHWL24} cuts down comparisons by pre-computing which pairs are comparable, and induces FDs directly from difference sets on those pairs. In practice, tuple-oriented techniques tend to be attractive for wide schemas, but they are less competitive on long tables where the number of tuple pairs dominates.

\textit{Hybrid approaches.} Hybrid methods are motivated by the fact that attribute-based and tuple-based techniques have complementary strengths. HyFD~\cite{DBLP:conf/sigmod/PapenbrockN16} is a representative example. It first samples tuple pairs to generate promising candidates and then switches to an attribute-oriented phase to validate them. The algorithm may switch back and forth between the two phases based on validation efficiency. DHyFD~\cite{DBLP:conf/icde/WeiL19} extends HyFD with an FD-tree variant and dynamic data management. More recently, FDHITS~\cite{DBLP:journals/pacmmod/BleifussPBSN24} casts FD discovery as a hitting-set enumeration search and combines parallel exploration with one-pass validation, which leads to strong performance in modern settings.

\textit{Summary.} Existing FD discovery algorithms all try to find every valid FD, no matter what strategy they use. This can produce huge result sets that overwhelm users and make it hard to identify useful FDs.

\subsection{Top-$k$ Paradigms in Rule Discovery}

Top-$k$ FD discovery solves this by returning only the $k$ most useful dependencies based on a scoring function~\cite{DBLP:journals/debu/Chomicki11, DBLP:journals/tkde/HanLG15, DBLP:journals/tkde/HanLG17}. Instead of overwhelming users with all valid FDs, it identifies the most relevant ones and does not require setting thresholds.

\textit{Rule and Itemset Mining.} Top-$k$ discovery is well-studied 
for association rules, using metrics like support and confidence~\cite{DBLP:journals/dase/WanH24,DBLP:journals/apin/LiuNF21,DBLP:journals/datamine/PetitjeanLTW16}. Similarly, high-utility itemset mining ranks patterns by economic value instead of frequency~\cite{DBLP:journals/ker/KumarS24}. These techniques are designed for transactional data. In this paper, we apply top-$k$ discovery to functional dependencies, which are integrity constraints in relational databases.

\textit{Entity Enhancing Rules.} Top-$k$ discovery has also been applied to Entity Enhancing Rules (REEs)~\cite{DBLP:journals/pacmmod/FanHWX23, DBLP:journals/pacmmod/FanHXZ24}. REEs use active learning to incorporate user feedback on which rules are interesting and diverse, rather than relying only on objective measures. This work shows that ranking methods can work for rules similar to functional dependencies.

\textit{Functional Dependencies.} Wei and Link~\cite{DBLP:journals/is/WeiL23} propose redundancy count to measure the practical utility of FDs. This metric captures how much duplicated information an FD explains, which relates directly to storage overhead and update anomalies. While they propose ranking FDs by redundancy, they do not explain how to use this metric to guide discovery. Existing methods do not integrate redundancy into the search process for top-$k$ results.

\section{Baseline Algorithm} \label{sec:ba}

To the best of our knowledge, no existing algorithm is specifically designed for top-$k$ FD discovery based on redundancy count. Thus, we first establish a baseline method, referred to as FDR (Full-Discovery-and-Rank).

\subsection{FDR algorithm}

FDR uses a two-phase post-processing strategy. It first discovers all valid FDs, then ranks them by redundancy count to find the top-$k$ results.

\subsubsection{Phase 1 (exhaustive FD discovery)}

The first phase discovers all valid FDs, denoted as $\mathcal{F}$, from the relation $r$. We use FDHITS algorithm~\cite{DBLP:journals/pacmmod/BleifussPBSN24}, the state-of-the-art FD discovery algorithm, for a strong baseline.

We detail the internal mechanisms of FDHITS below. Understanding these mechanisms is important because SDP (Section~\ref{sec:sdp}) builds on them to enable efficient top-$k$ discovery.

\textit{Execution Process of FDHITS.} FDHITS extends the hypergraph-based framework of \textsc{HPIvalid}~\cite{DBLP:journals/pvldb/BirnickBFNPS20}. It characterizes the invalidity of an FD as a hitting set problem on \textit{difference sets}. To avoid the quadratic cost $O(n^2)$ of computing all pairwise differences, it employs a \textit{sampling-based approximation} followed by exact validation. The key steps are listed as below.

\begin{itemize}
	\item \textit{Difference set sampling.} Given relation $r$ and schema $R$, the difference set for two tuples $t_1, t_2$ is $\Delta(t_1, t_2) = \{ A \in R \mid t_1[A] \neq t_2[A] \}$. FDHITS avoids generating all pairs by performing weighted sampling on the equivalence class partitions of each attribute. It generates a tentative hypergraph $\mathcal{H}' = (R, \mathcal{D}')$, where $\mathcal{D}'$ contains the difference sets from sampled tuple pairs.
	
	\item \textit{Induced sub-hypergraph.} For a specific RHS attribute $A \in R$, the algorithm constructs an induced sub-hypergraph $\mathcal{H}'_A = (R, \mathcal{D}'_A)$, where $\mathcal{D}'_A = \{ E \setminus \{A\} \mid A \in E, E \in \mathcal{D}'\}$. This structure encodes the constraints necessary for FDs determining $A$.
	
	\item \textit{Minimal hitting set enumeration.} A candidate LHS $X \subseteq R \setminus \{A\}$ is valid with respect to the sample if it hits every edge in $\mathcal{H}'_A$ (i.e., $X \cap E \neq \emptyset, \forall E \in \mathcal{D}'_A$). FDHITS enumerates minimal hitting sets using the MMCS tree search algorithm~\cite{DBLP:journals/dam/MurakamiU14}. \textit{Note that this enumeration is exhaustive and order-agnostic regarding redundancy.}
	
	\item \textit{Validation \& refinement.} Each candidate FD $X \rightarrow A$ derived from the hitting sets is validated against the full dataset $r$ using equivalence class partitions with respect to \textit{X}. If invalid, new difference sets are added to $\mathcal{H}'$, and the search refines.
\end{itemize}

Upon termination, phase 1 outputs the exact set of all valid FDs $\mathcal{F}$.

\subsubsection{Phase 2 (post-processing and ranking)}
In the second phase, FDR iterates through $\mathcal{F}$ to compute the redundancy count. Crucially, during the validation step in phase 1, FDHITS constructs the equivalence class partition $\pi_X$ for each candidate LHS $X$. We leverage this by computing the redundancy count immediately:
\[
red(X \rightarrow A) = \sum_{[t]_X \in \pi_{X}} (|[t]_X| - 1).
\]
Finally, all FDs are sorted in descending order of their redundancy counts, and the top-$k$ FDs are returned.

\subsection{Analysis of the Baseline}

The FDR baseline provides a crucial reference point due to its guaranteed correctness. Since phase 1 is exhaustive, FDR provably returns the exact top-$k$ result. 


\textit{Limitations.} The main problem with FDR is that it separates discovery and ranking. Phase 1 may discover many FDs with very low redundancy (e.g., near-key dependencies), which are then discarded in phase 2. This two-phase approach is very expensive on large datasets. To address this, SDP integrates redundancy into the search process to prune unpromising branches early.

\section{The SDP Algorithm} \label{sec:sdp}

SDP (Selective-Discovery-and-Prune) avoids exhaustive search by integrating redundancy-based pruning into the search process. It builds on the FDHITS framework~\cite{DBLP:journals/pacmmod/BleifussPBSN24}, which uses minimal hitting set enumeration on hypergraphs. The key idea is to use an upper bound on redundancy. Once the upper bound of a candidate falls below the top-$k$ threshold, we skip the entire branch without validation.

\subsection{Upper Bound Estimation Strategy} \label{subsec:sdp:uppest}

To prune effectively, we need an upper bound on redundancy that we can compute quickly. We get this using the monotonicity of partition sizes. Recall that for an FD $X \rightarrow A$, the redundancy count is $red(X \rightarrow A) = n - |\pi_{X}|$. Since $n$ is constant, maximizing redundancy is equivalent to minimizing the partition size $|\pi_X|$.

\begin{theorem} \label{theorem:cardinalityOfSglEqu}
	For any attribute subset $X \subseteq R$ and any attribute $B \in X$, the partition cardinality satisfies:
	\[
	|\pi_B| \;\le\; |\pi_X|.
	\]
\end{theorem}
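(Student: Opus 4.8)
The plan is to prove the inequality by establishing that $\pi_X$ is a \emph{refinement} of $\pi_B$ whenever $B \in X$, and then invoking the elementary fact that a finer partition has at least as many blocks as a coarser one. The intuition is that enlarging the conditioning attribute set can only split existing equivalence classes, never merge them, so the block count is non-decreasing in the attribute set.

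First I would show that each block of $\pi_X$ lies entirely inside a single block of $\pi_B$. Take any equivalence class $[t]_X \in \pi_X$ and any two tuples $u, v \in [t]_X$. By definition of the equivalence class, $u[X] = v[X] = t[X]$. Since $B \in X$, the projection onto $X$ determines the value on $B$, so in particular $u[B] = v[B] = t[B]$. Hence $u$ and $v$ belong to the same class of $\pi_B$, namely $[t]_B$. This proves the containment
\[
[t]_X \subseteq [t]_B \quad \text{for every } t \in r,
\]
which is precisely the statement that $\pi_X$ refines $\pi_B$.

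Next I would turn this refinement into a cardinality comparison by exhibiting a surjection $\phi \colon \pi_X \to \pi_B$ defined by $\phi([t]_X) = [t]_B$. The map is \emph{well-defined} because the containment above shows the value $[t]_B$ does not depend on the representative chosen from $[t]_X$. It is \emph{surjective} because every tuple $t \in r$ lies in some block $[t]_X \in \pi_X$, and that block is sent to the arbitrary block $[t]_B \in \pi_B$; since every block of $\pi_B$ is non-empty and contains at least one tuple, each is hit. The existence of a surjection from $\pi_X$ onto $\pi_B$ immediately yields $|\pi_B| \le |\pi_X|$, which is the desired bound.

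I do not expect a genuine obstacle here, since this is a standard partition-refinement argument; the only point requiring mild care is verifying that $\phi$ is well-defined, which is exactly where the hypothesis $B \in X$ enters. It is worth noting that the same reasoning extends verbatim to any pair of attribute sets $Y \subseteq X$ (replacing single attribute $B$ with $Y$), giving the more general monotonicity $|\pi_Y| \le |\pi_X|$; this general form is the property SDP actually relies on for its monotone redundancy bound, since $red(X \rightarrow A) = n - |\pi_X|$ is then non-increasing as the LHS grows.
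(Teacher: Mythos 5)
Your proof is correct and follows essentially the same route as the paper's: both establish that $\pi_X$ refines $\pi_B$ because agreement on $X$ forces agreement on $B$ when $B \in X$, and then conclude the cardinality bound. The only difference is that you make the final counting step explicit via the surjection $\phi([t]_X) = [t]_B$, which the paper leaves implicit; this is a welcome bit of rigor but not a different argument.
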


\begin{proof}
	$\forall t_1, t_2 \in r$, if they belong to the same equivalence class in $\pi_X$ (i.e., $t_1[X] = t_2[X]$), they must imply $t_1[B] = t_2[B]$ since $B \in X$. Thus, every equivalence class in $\pi_X$ is a subset of some equivalence class in $\pi_B$. This implies that $\pi_X$ is a refinement of $\pi_B$, and consequently $|\pi_B| \le |\pi_X|$.
\end{proof}

This monotonicity naturally extends to attribute subsets.

\begin{theorem} \label{theorem:cardinalityOfSubsetEqu}
	For any $Y \subseteq X \subseteq R$, it holds that $|\pi_Y| \le |\pi_X|$.
\end{theorem}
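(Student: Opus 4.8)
The plan is to derive Theorem~\ref{theorem:cardinalityOfSubsetEqu} as a direct generalization of Theorem~\ref{theorem:cardinalityOfSglEqu}, which already handles the case of a single attribute $B \in X$. The core observation is identical: enlarging the conditioning attribute set can only refine the partition, so moving from $Y$ to the superset $X$ cannot decrease the number of equivalence classes.

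First I would argue the refinement relationship directly. Take any two tuples $t_1, t_2 \in r$ lying in the same equivalence class of $\pi_X$, so that $t_1[X] = t_2[X]$. Since $Y \subseteq X$, agreement on all of $X$ forces agreement on the projection onto $Y$, i.e. $t_1[Y] = t_2[Y]$, so $t_1$ and $t_2$ also share an equivalence class in $\pi_Y$. Hence every class of $\pi_X$ is contained in some class of $\pi_Y$, which is exactly the statement that $\pi_X$ refines $\pi_Y$. A refinement has at least as many blocks as the coarser partition, so $|\pi_Y| \le |\pi_X|$, completing the argument.

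Alternatively, I would present a short inductive proof that reuses Theorem~\ref{theorem:cardinalityOfSglEqu} as the base mechanism, which keeps the two results tightly linked. Writing $X \setminus Y = \{B_1, \ldots, B_j\}$, I would build the chain $Y \subseteq Y \cup \{B_1\} \subseteq \cdots \subseteq X$ and apply the single-attribute monotonicity at each step to obtain $|\pi_Y| \le |\pi_{Y \cup \{B_1\}}| \le \cdots \le |\pi_X|$. The base case $Y = X$ is trivial, and each inductive step is precisely an instance of the refinement lemma. This framing emphasizes that subset monotonicity is nothing more than iterated single-attribute refinement.

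I do not expect any genuine obstacle here, since the result is a standard lattice-theoretic fact about partition refinement and the essential work is already done in Theorem~\ref{theorem:cardinalityOfSglEqu}. The only point requiring a little care is to state the containment in the correct direction: it is the finer partition $\pi_X$ whose blocks sit inside the blocks of the coarser $\pi_Y$, and it is crucial not to invert this when concluding $|\pi_Y| \le |\pi_X|$. Everything else is routine.
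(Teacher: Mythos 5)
Your first argument is exactly the paper's proof: agreement on $X$ implies agreement on $Y \subseteq X$, so $\pi_X$ refines $\pi_Y$ and hence $|\pi_Y| \le |\pi_X|$; this is correct and essentially identical to what the paper does. One small caveat on your alternative: the inductive step $|\pi_{Y'}| \le |\pi_{Y' \cup \{B\}}|$ for a \emph{set} $Y'$ is not literally an instance of Theorem~\ref{theorem:cardinalityOfSglEqu} (whose left-hand side is a single attribute), but rather another application of the same refinement argument, so the induction adds nothing beyond the direct proof.
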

\begin{proof}
	$\forall t_1, t_2 \in r$, if $t_1[X] = t_2[X]$, then $t_1[Y] = t_2[Y]$ for any subset $Y$. Thus, $\pi_X$ refines $\pi_Y$ and $|\pi_Y| \le |\pi_X|$.
\end{proof}

Furthermore, for a valid FD, the RHS imposes a constraint on the partition size.

\begin{theorem} \label{theorem:cardinalityOfRHS}
	For a valid FD $X \rightarrow A$, it holds that $|\pi_A| \le |\pi_X|$.
\end{theorem}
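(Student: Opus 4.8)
The plan is to mirror the refinement argument used in Theorems~\ref{theorem:cardinalityOfSglEqu} and~\ref{theorem:cardinalityOfSubsetEqu}, but to replace the structural hypothesis ($B \in X$ or $Y \subseteq X$) with the semantic hypothesis that the FD $X \rightarrow A$ is valid. The core observation is that FD validity supplies exactly the same implication that attribute containment supplied in the earlier results: whenever two tuples agree on $X$, they are forced to agree on $A$. This implication is what drives the refinement, and hence the cardinality comparison.

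First I would invoke the definition of a valid FD directly. For any $t_1, t_2 \in r$ lying in the same equivalence class of $\pi_X$, we have $t_1[X] = t_2[X]$, and validity of $X \rightarrow A$ yields $t_1[A] = t_2[A]$. Consequently $t_1$ and $t_2$ also share an equivalence class in $\pi_A$. Next I would phrase this as a containment of classes: every equivalence class $[t]_X \in \pi_X$ is contained in a single equivalence class $[t]_A \in \pi_A$. This is precisely the statement that $\pi_X$ is a refinement of $\pi_A$.

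Finally I would conclude from the refinement relation that the number of blocks cannot increase when passing from the coarser partition $\pi_A$ to the finer partition $\pi_X$, giving $|\pi_A| \le |\pi_X|$. This last step is identical in spirit to the concluding line of the two preceding theorems, so it can simply reuse the observation that a refinement has at least as many classes as the partition it refines.

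I do not expect any genuine obstacle here; the only point that deserves care is making explicit that the \emph{validity} of $X \rightarrow A$ is what licenses the implication $t_1[X] = t_2[X] \Rightarrow t_1[A] = t_2[A]$. In the earlier theorems this implication held \emph{structurally} because $A$ (or the subset $Y$) was literally part of the LHS, whereas here it holds \emph{semantically} as a consequence of the dependency holding in $r$. Once this substitution is noted, the refinement-to-cardinality chain goes through verbatim.
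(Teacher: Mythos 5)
Your proposal is correct and follows exactly the paper's argument: use the validity of $X \rightarrow A$ to show that tuples agreeing on $X$ also agree on $A$, conclude that $\pi_X$ refines $\pi_A$, and read off $|\pi_A| \le |\pi_X|$ from the refinement. Your explicit remark that validity here plays the role that attribute containment played in Theorems~\ref{theorem:cardinalityOfSglEqu} and~\ref{theorem:cardinalityOfSubsetEqu} is a nice clarification, but the underlying proof is the same as the paper's.
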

\begin{proof}
	By the definition of an FD, $t_1[X] = t_2[X]$ implies $t_1[A] = t_2[A]$. Thus, tuples in the same LHS equivalence class map to the same RHS value, meaning $\pi_X$ refines $\pi_A$.
\end{proof}

Combining these theorems, we derive the upper bound. Since $|\pi_X| \ge \max_{B \in X} |\pi_B|$ and $|\pi_X| \ge |\pi_A|$ (if valid), the minimum possible size for $|\pi_X|$ is constrained by these lower bounds. Consequently, the maximum possible redundancy count for valid $X \rightarrow A$, denoted $\overline{red}(X \rightarrow A)$, is:

\begin{equation*} \label{eq:upperbound}
	\overline{red}(X \rightarrow A) = \min\left( n - \max_{B \in X} |\pi_B|, \quad n - |\pi_A| \right)
\end{equation*}

\textit{Computational Efficiency.} We pre-compute $|\pi_B|$ for every attribute $B \in R$ during the initial sampling phase. Thus, computing the upper bound for any candidate $X$ only needs to iterate over attributes in $X$, which takes $O(|X|)$ time. This is much faster than validation, so pruning checking is very fast.

\subsection{Execution Process of SDP} \label{sec:sdp:executionprocess}

SDP follows the general pipeline of difference set sampling and hypergraph construction. However, unlike the baseline, it incorporates dynamic pruning. To facilitate this, SDP maintains:
\begin{itemize}
	\item A min-heap $MH$ of size $k$ to store the top-$k$ FDs found so far.
	\item A dynamic pruning threshold $\tau$, defined as the minimum redundancy count in $MH$ (or 0 if $|MH| < k$).
\end{itemize}

For a specific $A \in R$, SDP enumerates the minimal hitting sets on the induced sub-hypergraph $\mathcal{H}_A'$ and aims to identify valid FDs with RHS attribute $A$. Different from FDR, SDP uses the following pruning principle to reduce the search space.

\textit{The Pruning Principle.}
Based on the derived upper bound, we establish the following pruning rule:
\begin{quote}
	\textit{For any candidate LHS $X$ and RHS $A$, if $\overline{red}(X \rightarrow A) < \tau$, then neither $X \rightarrow A$ nor any of its supersets (specializations) can be a top-$k$ result.}
\end{quote}

\begin{theorem} \label{theorem:MonotonicityofUpperBound}
	$\forall X \subseteq R$ and $B \in R \setminus (X \cup \{ A \})$, $\overline{red}(X \cup \{B\} \to A) \le \overline{red}(X \to A)$.
\end{theorem}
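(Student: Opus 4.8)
The plan is to reduce the claim to two elementary monotonicity facts, one for the inner $\max$ and one for the two-argument $\min$, with no appeal to partition refinement beyond what is already baked into the bound. First I would write both bounds out explicitly from the definition of $\overline{red}$:
\[
\overline{red}(X \to A) = \min\!\left( n - \max_{B' \in X} |\pi_{B'}|,\ n - |\pi_A| \right),
\]
\[
\overline{red}(X \cup \{B\} \to A) = \min\!\left( n - \max_{B' \in X \cup \{B\}} |\pi_{B'}|,\ n - |\pi_A| \right).
\]
The whole argument then hinges on comparing these two minima term by term.

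The key observation is that the second argument, $n - |\pi_A|$, depends only on the fixed RHS attribute $A$ and is therefore \emph{identical} in both expressions; extending the LHS cannot touch it. Only the first argument can change. For that argument I would note that $X \subseteq X \cup \{B\}$, so the maximum taken over the larger index set is at least the maximum over the smaller one:
\[
\max_{B' \in X \cup \{B\}} |\pi_{B'}| = \max\!\left( \max_{B' \in X} |\pi_{B'}|,\ |\pi_B| \right) \ \ge\ \max_{B' \in X} |\pi_{B'}|.
\]
Subtracting from the constant $n$ reverses the inequality, which gives $n - \max_{B' \in X \cup \{B\}} |\pi_{B'}| \le n - \max_{B' \in X} |\pi_{B'}|$. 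Thus the first argument weakly decreases when $B$ is added, while the second stays fixed.

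Finally I would invoke monotonicity of the two-argument $\min$: if one argument weakly decreases and the other is held fixed, the minimum weakly decreases. Applying this with the first argument decreasing and the second unchanged yields $\overline{red}(X \cup \{B\} \to A) \le \overline{red}(X \to A)$, as required. I do not anticipate a genuine obstacle here, since the result is purely algebraic; the only point that deserves explicit care is the $\min$ step, where one must confirm that the unchanged RHS term $n - |\pi_A|$ cannot cause the bound to \emph{increase} when the LHS term drops (and in particular handle the case where the RHS term is the binding one in $\overline{red}(X \to A)$). It is worth emphasizing that this monotonicity is driven entirely by the single-attribute partition cardinalities appearing in the bound, independently of the validity of the FD, which is exactly what makes the bound safe to apply across an entire subtree with fixed RHS.
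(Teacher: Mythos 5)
Your proof is correct and follows essentially the same route as the paper's own proof: both decompose $\overline{red}$ into the LHS term $n - \max_{B' \in X}|\pi_{B'}|$ and the fixed RHS term $n - |\pi_A|$, observe that adding $B$ can only increase the inner maximum (hence decrease the first argument), and conclude via the elementary fact that $\min(u,c) \le \min(v,c)$ whenever $u \le v$. No meaningful difference in approach.
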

\begin{proof}
Let $M(X)=\max_{B\in X}|\pi_B|$. Since $M(X\cup\{B\})=\max\big(M(X),\,|\pi_B|\big)\ge M(X)$, we have $n - M(X\cup\{B\}) \le n - M(X)$.
The term $n-|\pi_A|$ is independent of $X$. For any real numbers $u \le v$ and any constant $c$, it holds that $\min(u,c) \le \min(v,c)$.
Taking $u = n - M(X\cup\{B\})$, $v = n - M(X)$, and $c = n - |\pi_A|$, we obtain $\overline{red}(X \cup \{B\} \rightarrow A) = \min\left( n - M(X\cup\{B\}),\; n - |\pi_A| \right) \le \min\left( n - M(X),\; n - |\pi_A| \right) = \overline{red}(X \rightarrow A)$.
\end{proof}

Theorem \ref{theorem:MonotonicityofUpperBound} proves the monotonicity of upper bound of $red(X \rightarrow A)$ and the correctness of pruning principle.

\textit{Pruning-integrated tree search.}
We perform minimal hitting set enumeration using a set-enumeration tree search adapted from MMCS~\cite{DBLP:journals/dam/MurakamiU14}. Each node $o$ in the tree maintains the following information.
\begin{itemize}
	\item $o.atts$: the current LHS attributes of the candidate.
	\item $o.cand$: potential attributes to extend the candidate.
	\item $o.uncov$: hyperedges in $\mathcal{H}'_A$ not yet covered by $o.atts$, used for identifying the minimal hitting sets.
\end{itemize}

Crucially, we integrate pruning operation into two stages of tree search (see Algorithm \ref{alg:sdpSglAtt}):
\begin{enumerate}
	\item \textit{pre-expansion pruning (Line \ref{line:prune_child}).} Before branching a child node which extends \textit{o} by adding $B$ ($B \in o.cand$), we check if extending the current LHS is promising. If $\overline{red}(o.atts \cup \{B\} \rightarrow A) < \tau$, the child is never pushed to the stack.
	\item \textit{pre-validation pruning (Line \ref{line:prevalidationpruning}).} When a minimal hitting set is found, we check its upper bound again before invoking the expensive partition validation. If $\overline{red}(o.atts \rightarrow A) < \tau$, $o.atts \rightarrow A$ and its specializations cannot be top-$k$ results even though they are valid. Therefore, $o.atts \rightarrow A$ does not need to be validated and the sub-tree rooted at $o$ can be pruned entirely. Otherwise, SDP updates $MH$ if $o.atts \rightarrow A$ is validated to be valid and update $\tau$ (line \ref{line:alg1:9} to line \ref{line:alg1:12}).
\end{enumerate}

\begin{algorithm}[t]
	\renewcommand{\algorithmicrequire}{\textbf{Input:}}
	\renewcommand{\algorithmicensure}{\textbf{Output:}}
	\caption{SDPForSglRHS}
	\label{alg:sdpSglAtt}
	\footnotesize
	\begin{algorithmic}[1]
		\REQUIRE Hypergraph $\mathcal{H}'_A$, target $k$, global heap $MH$.
		\ENSURE Updated heap $MH$.
		\STATE Initialize stack $Q$ with root node $rt$.
		\STATE $\tau \gets (MH.size() < k) ? 0 : MH.min().redundancy$
		
		\WHILE{$Q$ is not empty}
		\STATE $o \gets Q.pop()$; \quad $X \gets o.atts$
		
		\IF{$o.uncov = \emptyset$ \textbf{and} $o$ is minimal} \label{line:case1}
		\LCOMMENT{// Case 1: Candidate found, check bound first.}
		\IF{$\overline{red}(X \to A) \ge \tau$} \label{line:prevalidationpruning}
		\STATE $valid \gets \textsc{Validate}(X \to A, r)$ \COMMENT{Expensive check}
		\IF{$valid$}
		\STATE $rc \gets n - |\pi_X|$ \label{line:alg1:9}
		\IF{$rc > \tau$}
		\STATE $MH.pop()$ and $MH.add(X \to A)$ 
		\STATE $\tau \gets MH.min().redundancy$
		\ENDIF \label{line:alg1:12}
		\ELSE
		\STATE Update $\mathcal{H}'_A$ with new edges from validation failure.
		\STATE Re-evaluate $o.uncov$ with new edges.
		\STATE \textbf{goto} Line \ref{line:expand} \COMMENT{Treat as Case 2 now}
		\ENDIF
		\ENDIF
		\ELSE
		\LCOMMENT{// Case 2: Expand node}
		\STATE $e_m \gets \arg\min_{e \in o.uncov} | e \cap o.cand |$ \label{line:expand}
		\FOR{each $B \in e_m \cap o.cand$} \label{line:alg1:line22}
		\STATE $cld \gets \textsc{GenerateChild}(o, B)$
		\IF{$\overline{red}(cld.atts \to A) \ge \tau$} \label{line:prune_child}
		\STATE $Q.push(cld)$ \COMMENT{Pruning branches}
		\ENDIF
		\ENDFOR
		\ENDIF
		\ENDWHILE
		\RETURN $MH$
	\end{algorithmic}
\end{algorithm}

\begin{algorithm}[t]
	\caption{SDP Algorithm (Overall)}
	\label{alg:sdpOverall}
	\footnotesize
	\begin{algorithmic}[1]
		\REQUIRE Relation $r$, integer $k$.
		\ENSURE Top-$k$ FDs.
		\STATE Compute partitions $\pi_A$ for all $A \in R$.
		\STATE $\mathcal{H}' \gets \textsc{DifferenceSetSampling}(r)$
		\STATE $MH \gets \text{new MinHeap}(k)$
		
		\FOR{each attribute $A \in R$} \label{line:alg2:line4}
		\STATE $\mathcal{H}'_A \gets \textsc{InduceHypergraph}(\mathcal{H}', A)$
		\STATE $MH \gets \textsc{SDPForSglRHS}(\mathcal{H}'_A, k, MH)$
		\ENDFOR
		\RETURN $MH$
	\end{algorithmic}
\end{algorithm}

Algorithm \ref{alg:sdpOverall} outlines the global procedure of SDP. By maintaining a global heap $MH$ across iterations for different RHS attributes, the threshold $\tau$ rises monotonically.  This means that FDs discovered early for one attribute help prune the search space for subsequent attributes, making the algorithm faster over time. As proved in Theorem \ref{theorem:OptimalityofSDP}, when SDP terminates, the FDs in $MH$ are top-$k$ results $\mathcal{F}_k$.

\begin{theorem} \label{theorem:OptimalityofSDP}
	The set $\mathcal{F}_k$ identified by the SDP algorithm satisfies the top-$k$ discovery requirement: $\forall f \in \mathcal{F}_k$ and $\forall f' \in \mathcal{F} \setminus \mathcal{F}_k, red(f) \ge red(f')$.
\end{theorem}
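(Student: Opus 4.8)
The plan is to show that when SDP terminates, the final threshold $\tau^{\ast} = \min_{f \in \mathcal{F}_k} red(f)$ (with $\tau^{\ast} = 0$ when $|MH| < k$) dominates every dependency left outside the heap, i.e. $red(f') \le \tau^{\ast}$ for all $f' \in \mathcal{F}\setminus\mathcal{F}_k$. Since $red(f) \ge \tau^{\ast}$ for every $f \in \mathcal{F}_k$ by construction of the heap, this is exactly the required top-$k$ property. Two structural facts drive the argument. First, I would record that $MH$ only ever holds FDs that have passed exact \textsc{Validate}, and that $\tau$ is non-decreasing: it starts at $0$, and each update replaces the heap minimum by an FD of strictly larger redundancy, so $\tau$ can only rise. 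Second, I would invoke the completeness of the underlying MMCS hitting-set enumeration inherited from FDHITS: absent pruning, the search visits every minimal hitting set and hence validates every minimal non-trivial valid FD for each RHS $A$. Consequently, the \emph{only} FDs that escape validation are exactly those lying in a pruned subtree.

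The crux is the safety of pruning, which I would establish from the validity and monotonicity of $\overline{red}$. For any valid FD $X' \rightarrow A$ with $X' \supseteq X$, Theorems~\ref{theorem:cardinalityOfSglEqu}--\ref{theorem:cardinalityOfRHS} give $red(X' \rightarrow A) = n - |\pi_{X'}| \le \overline{red}(X' \rightarrow A)$, and Theorem~\ref{theorem:MonotonicityofUpperBound} gives $\overline{red}(X' \rightarrow A) \le \overline{red}(X \rightarrow A)$. Hence whenever a node with LHS $X$ is pruned at the prevailing threshold $\tau_0$ (either the pre-expansion or the pre-validation check), every valid FD in the discarded subtree satisfies $red(X' \rightarrow A) \le \overline{red}(X \rightarrow A) < \tau_0$. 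Because $\tau$ never decreases, $\tau_0 \le \tau^{\ast}$, so every pruned valid FD has redundancy strictly below $\tau^{\ast}$.

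It then remains to classify each $f' \in \mathcal{F}\setminus\mathcal{F}_k$ and bound its redundancy by $\tau^{\ast}$, which I would do in three exhaustive cases. If $f'$ was pruned, the previous paragraph gives $red(f') < \tau^{\ast}$. If $f'$ was validated and inserted but later evicted, then at eviction it was the heap minimum and the threshold equalled $red(f')$, so monotonicity of $\tau$ yields $red(f') \le \tau^{\ast}$. If $f'$ was validated but never inserted (its redundancy did not exceed the threshold while the heap was already full), then $red(f') \le \tau_{\text{at that time}} \le \tau^{\ast}$. These cases are exhaustive by the completeness fact above, so every $f'$ outside $\mathcal{F}_k$ obeys $red(f') \le \tau^{\ast} \le red(f)$ for all $f \in \mathcal{F}_k$, closing the proof.

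I expect the main obstacle to be the moving-target nature of $\tau$: pruning is performed against intermediate thresholds strictly smaller than $\tau^{\ast}$, so one must argue that an FD discarded early cannot later deserve a top-$k$ slot. Monotonicity of $\tau$ is precisely what resolves this—at the moment of pruning there already existed $k$ validated FDs each weakly better than the pruned candidate, and the heap only improves afterwards, so the pruned FD can never overtake them. A secondary point to handle carefully is that the $n - |\pi_A|$ term of $\overline{red}$ is only a valid bound for \emph{valid} FDs (via Theorem~\ref{theorem:cardinalityOfRHS}); this is harmless, since $red$ is defined only for valid FDs and invalid candidates are filtered by \textsc{Validate} rather than by the bound.
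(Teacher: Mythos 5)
Your proposal is correct, and its core coincides with the paper's: pruning safety follows from the monotone upper bound (Theorem~\ref{theorem:MonotonicityofUpperBound}) together with the non-decreasing threshold $\tau$, so nothing pruned at an intermediate threshold can later deserve a top-$k$ slot. The difference is one of completeness rather than of route. The paper's proof consists \emph{only} of this pruning-safety argument; it leaves implicit (i) that bounding the true redundancy of a descendant $X' \rightarrow A$ by $\overline{red}(X \rightarrow A)$ requires both the step $red(X' \rightarrow A) \le \overline{red}(X' \rightarrow A)$ for valid FDs (your use of Theorems~\ref{theorem:cardinalityOfSglEqu}--\ref{theorem:cardinalityOfRHS}) and an inductive extension of Theorem~\ref{theorem:MonotonicityofUpperBound} from a single added attribute to arbitrary supersets, (ii) the completeness of the underlying MMCS/FDHITS enumeration, which is precisely what makes ``pruned, evicted, or rejected at insertion'' an exhaustive classification of $\mathcal{F} \setminus \mathcal{F}_k$, and (iii) the heap-bookkeeping cases, namely that an FD evicted from $MH$ or validated but never inserted is also dominated by the final threshold $\tau^{\ast}$. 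Your three-case decomposition supplies exactly these missing pieces, so your argument is strictly more rigorous than the paper's; in particular, your closing observation that the $n - |\pi_A|$ term of $\overline{red}$ only bounds \emph{valid} FDs, and that this is harmless because $red$ is undefined for invalid candidates (which are filtered by \textsc{Validate}), is a genuine subtlety the paper never addresses.
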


\begin{proof}
	According to Theorem \ref{theorem:MonotonicityofUpperBound}, the upper bound $\overline{red}$ is monotonically non-increasing as attributes are added to the LHS. This implies that for any FD $f'$ in the subtree rooted at the current candidate $X \to A$, its true redundancy count is bounded by $\overline{red}(X \to A)$. If the pruning condition $\overline{red}(X \to A) < \tau$ is met, it follows that $red(f') < \tau$, and any $f'$ in the pruned branch cannot possibly belong to the global top-$k$ set. Therefore, no potential top-$k$ results are omitted, ensuring the global optimality of $\mathcal{F}_k$.
\end{proof}

\section{Algorithm Optimization Strategies} \label{sec:optimization}

The basic SDP algorithm works correctly but can be slow. Its speed depends on two factors: how fast the threshold $\tau$ increases and how tight the upper bound is. We optimize both factors with three improvements.

\textit{Speed up $\tau$.} We make $\tau$ increase faster by searching high-redundancy parts of the lattice first. This uses attribute ordering (Section~\ref{sec:opt:order}) and global guidance (Section~\ref{sec:opt:global}).

\textit{Tighter bounds.} We compute a tighter upper bound $\overline{red}^*$ using pairwise attribute information from a partition cardinality matrix (Section~\ref{sec:opt:attpair}).

\subsection{Heuristic Attribute Access Order} \label{sec:opt:order}

Pruning works better when we search high-redundancy candidates first. A random order does not make $\tau$ increase quickly. We use a global ordering $\sigma$ to search high-redundancy parts of the lattice first.

\textit{The Heuristic.} We sort all attributes in ascending order of their partition sizes. Let $\sigma$ be a permutation of $\{1, \ldots, m\}$ such that:
\[
|\pi_{A_{\sigma(1)}}| \le |\pi_{A_{\sigma(2)}}| \le \cdots \le |\pi_{A_{\sigma(m)}}|
\]

We use this ordering in two ways to find high-redundancy FDs faster. Since $\overline{red}(X \to A) = \min(n - \max_{B \in X} |\pi_B|, n - |\pi_A|)$, we want to keep both terms large: choose small $|\pi_A|$ for the second term, and add small $|\pi_B|$ so the first term stays high.

\begin{itemize}
	\item \textit{RHS selection (outer loop, line \ref{line:alg2:line4} in Algorithm \ref{alg:sdpOverall}).} We iterate through RHS 	attributes $A$ in order $\sigma$. By choosing $A$ with small $|\pi_A|$ first, 	the term $(n - |\pi_A|)$ is maximized. Since valid FDs must satisfy $|\pi_X| \ge |\pi_A|$, attributes with small partitions are more likely to be the RHS of high-redundancy FDs. Small $|\pi_A|$ makes this constraint easier to satisfy, making high-redundancy FDs more likely.
	
	\item \textit{LHS extension (inner loop, line \ref{line:alg1:line22} in Algorithm \ref{alg:sdpSglAtt}).} Within the MMCS tree, we 	extend candidate LHS $X$ by adding attributes $B$ also in order $\sigma$.	Adding attributes with small $|\pi_B|$ minimizes the increase in $\max_{B \in X} |\pi_B|$, keeping the upper bound high. This makes the search favor LHS combinations with few equivalence classes, which lead to high redundancy.
\end{itemize}

By searching high-bound candidates first, $\tau$ increases faster, which prunes more branches for the rest of the search. Note that this ordering is purely heuristic and does not affect the correctness of SDP.

\subsection{Tighter Pruning with Attribute Pairs} \label{sec:opt:attpair}

The single-attribute upper bound can be loose. For instance, two attributes $B$ and $C$ might individually have small partitions, but their combination $BC$ could form a key (large partition). To capture such interactions, we incorporate partition cardinalities of attribute pairs.

\textit{The pairwise bound.} Using the property $|\pi_{BC}| \ge \max(|\pi_B|, |\pi_C|)$, we refine the upper bound for a candidate FD $X \rightarrow A$:
\begin{equation*}
	\overline{red}^*(X \rightarrow A) = n - \max \left(|\pi_A|, \max_{B \in X} |\pi_B|, \; \max_{BC \subseteq X} |\pi_{BC}| \right).
\end{equation*}
Clearly, $\overline{red}^* \le \overline{red}$, offering strictly stronger pruning power.

\textit{Partition cardinality matrix} (PCM). Computing all $\binom{m}{2}$ pairwise partitions is prohibitively expensive ($O(m^2 n)$). We address this with a partition cardinality matrix (PCM), which uses a hybrid strategy.
\begin{enumerate}
	\item \textit{Selective pre-computation.} We pre-compute pairwise cardinalities only for the first $d$ attributes in the heuristic order $\sigma$. Setting $d = \lceil \sqrt{m} \rceil$ ensures the pre-computation cost remains comparable to a single pass over all attributes, satisfying the budget constraint.
	\item \textit{Opportunistic updates.} During the search, whenever a candidate FD with a 2-attribute LHS (e.g., $BC \to A$) is validated, $|\pi_{BC}|$ is computed as a byproduct. We cache this value in the PCM. This allows the PCM to learn and densify as the algorithm progresses, improving pruning for future branches at zero extra cost.
\end{enumerate}

\textit{Incremental bound computation.} 
A naive evaluation of $\overline{red}^*$ takes $O(|X|^2)$. We optimize this to $O(|X|)$ by using the set-enumeration tree structure. Let a child node $o_{child}$ (attribute set $Y$) be generated from parent $o_{parent}$ (attribute set $X$) by adding attribute $C$ (i.e., $Y = X \cup \{ C \}$). We maintain a variable ${M}^*(Y)$ tracking the maximum partition size observed within $Y$ (considering both single attributes and available pairs). It is updated incrementally:
\[
M^*(Y) = \max \left( M^*(X), \; |\pi_C|, \; \max_{B \in X} \text{PCM}(B, C) \right),
\]
where $\text{PCM}(B, C)$ returns $|\pi_{BC}|$ if available in PCM, or 0 otherwise. This requires iterating only over the existing attributes in $X$, ensuring the check remains linear $O(|X|)$. 

The corresponding upper bound $	\overline{red}^*(X \rightarrow A) = n - \max(M^*(X), |\pi_A|)$, and $\overline{red}^*(X \rightarrow A) \le \overline{red}(X \rightarrow A)$. 

\begin{theorem}
\label{thm:monotonicity}
For any two attribute sets $X$ and $Y$ where $X \subseteq Y$, and a fixed RHS $A$, it holds that $\overline{red}^*(Y \rightarrow A) \le \overline{red}^*(X \rightarrow A)$.
\end{theorem}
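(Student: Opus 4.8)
The plan is to reduce the statement to a monotonicity claim about the single quantity $M^*$ and then finish with the same ``$\max$ against a constant'' argument already used in the proof of Theorem~\ref{theorem:MonotonicityofUpperBound}. Concretely, recall that $\overline{red}^*(Z \to A) = n - \max\bigl(M^*(Z),\,|\pi_A|\bigr)$, where $n$ and $|\pi_A|$ do not depend on the LHS. The map $t \mapsto n - \max(t, |\pi_A|)$ is non-increasing in $t$, so it suffices to prove $M^*(X) \le M^*(Y)$ whenever $X \subseteq Y$; the desired inequality $\overline{red}^*(Y \to A) \le \overline{red}^*(X \to A)$ then follows immediately.

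To establish $M^*(X) \le M^*(Y)$, I would write $M^*(Z)$ in its closed form as a maximum over the single-attribute cardinalities $|\pi_B|$ for $B \in Z$ together with the available pairwise cardinalities $\text{PCM}(B,C)$ for $\{B,C\} \subseteq Z$. Because $X \subseteq Y$, every attribute of $X$ is an attribute of $Y$ and every pair contained in $X$ is a pair contained in $Y$, so the index set over which the maximum defining $M^*(X)$ is taken is a subset of the index set defining $M^*(Y)$. A maximum of non-negative terms over a subset never exceeds the maximum over the superset, which yields $M^*(X) \le M^*(Y)$.

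The one point that needs care---and the main obstacle---is the dependence of $M^*$ on the \emph{state} of the PCM, since $\text{PCM}(B,C)$ returns $0$ for pairs not yet cached and the matrix densifies as the search proceeds. I would resolve this by fixing a single PCM snapshot for the comparison and observing two things: first, availability is monotone with respect to set inclusion, since any pair realized inside $X$ is also a pair inside $Y$; second, the $0$ default is harmless, because all cardinalities are non-negative and each term enters only inside a $\max$, so substituting $0$ for a missing pair can only lower a term and never breaks the subset-of-index-sets argument. Equivalently, one may argue directly from the incremental recurrence $M^*(Y') = \max\bigl(M^*(X'),\,|\pi_C|,\,\max_{B \in X'}\text{PCM}(B,C)\bigr)$ for a one-attribute extension $Y' = X' \cup \{C\}$: each step takes a maximum with the predecessor's value, so $M^*$ is non-decreasing along any chain, and induction over a chain from $X$ to $Y$ gives $M^*(X) \le M^*(Y)$.

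Finally I would assemble the pieces exactly as in Theorem~\ref{theorem:MonotonicityofUpperBound}: from $M^*(X) \le M^*(Y)$ and monotonicity of $\max$ we obtain $\max(M^*(X),|\pi_A|) \le \max(M^*(Y),|\pi_A|)$, and subtracting from $n$ reverses the inequality to give $\overline{red}^*(Y \to A) \le \overline{red}^*(X \to A)$. I expect the bookkeeping around PCM availability to be the only subtle part; the algebraic core is a one-line consequence of set inclusion.
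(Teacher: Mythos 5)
Your proposal is correct and follows essentially the same route as the paper's own proof: reduce to $M^*(X) \le M^*(Y)$ via the subset-of-index-sets argument, then take the $\max$ against the constant $|\pi_A|$ and subtract from $n$. Your additional care about fixing a PCM snapshot and noting that the $0$ default for uncached pairs is harmless is a sound refinement of a point the paper's proof leaves implicit, but it does not change the argument.
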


\begin{proof}
Recall that $M^*(X)$ represents the maximum partition cardinality among single attributes and available pairs within $X$. Since $X \subseteq Y$, the set of components considered for $M^*(X)$ is a subset of those for $M^*(Y)$, implying $M^*(X) \le M^*(Y)$. Consequently, $\max(M^*(X), |\pi_A|) \le \max(M^*(Y), |\pi_A|)$. Subtracting from $n$ reverses the inequality, yielding the stated result.
\end{proof}

Theorem \ref{thm:monotonicity} guarantees that the pruning strategy is safe. Specifically, if a node is pruned because $\overline{red}^* < \tau$, its descendants can be skipped as their redundancy bounds cannot exceed $\overline{red}^*$. Since $\overline{red}^* \le \overline{red}$ always holds, SDP inherits the correctness of the base framework while achieving strictly stronger pruning power. 

In the following, we use $\overline{red}^*$ as the \textit{default upper bound} in SDP to maximize pruning efficiency.

\subsection{Accelerated Convergence via Global Best-First Search} \label{sec:opt:global}

Processing RHS attributes one by one may waste time on an RHS with no top-$k$ FDs while other RHS attributes have better candidates. To fix this, we use a global best-first search strategy.

Instead of processing the $m$ set-enumeration trees sequentially, we schedule them using a global priority queue. We maintain $m$ independent stacks $Q_1, \ldots, Q_m$, where $Q_i$ tracks the DFS traversal for RHS $A_i$. A global priority queue $\mathcal{P}$ manages the search frontiers (i.e., top nodes) in $Q_1, \ldots, Q_m$. Each element in $\mathcal{P}$ is a tuple $(score, i)$, where $i$ is the stack index and $score$ is $\overline{red}^*$ of the top node in $Q_i$.

\textit{Execution Flow.}
The scheduler works as follows.
\begin{enumerate}
	\item Extract the tuple with highest score from $\mathcal{P}$. Let this tuple be $(\overline{red}^*, i)$, representing the best active branch across all RHS.	
	\item Pop and process the top node from $Q_i$ (either validate the candidate FD, or expand the node by pushing its children to $Q_i$).	
	\item If $Q_i$ is not empty, compute $\overline{red}^*$ for the new top node and re-insert the updated tuple into $\mathcal{P}$.
\end{enumerate}

This ensures the algorithm always work on the best candidate across all RHS attributes. High-redundancy FDs are found earlier, making $\tau$ increase faster and pruning more branches across all trees.

\begin{theorem}
	\label{theorem:independence}
	Given an upper bound $\overline{red}^*$ and a dynamically updated threshold $\tau$, any traversal strategy that visits all candidate nodes $X$ where $\overline{red}^*(X \rightarrow A) \ge \tau$ guarantees to return the correct top-$k$ FD set.
\end{theorem}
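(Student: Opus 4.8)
The plan is to show that the pruning test is \emph{threshold-safe} in a way that never references the order of exploration, so that every strategy meeting the stated visiting condition survives exactly the same set of FDs. I would begin by fixing a target value $\tau^*$, defined as the $k$-th largest redundancy count among all valid FDs in $\mathcal{F}$ (or $0$ if $|\mathcal{F}| < k$). With this, the top-$k$ requirement is equivalent to returning $k$ FDs each of redundancy $\ge \tau^*$ while admitting no excluded FD of strictly larger redundancy. I would also recall that $\overline{red}^*$ is a genuine upper bound on the true redundancy of any valid FD it labels, since for valid $X\to A$ every single-attribute, pairwise, and RHS partition cardinality entering $M^*(X)$ is bounded above by $|\pi_X|$ (Theorems~\ref{theorem:cardinalityOfSubsetEqu} and~\ref{theorem:cardinalityOfRHS}), giving $\overline{red}^*(X\to A)\ge red(X\to A)$.

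The central invariant I would establish first is that the dynamic threshold never overshoots this target: at every step of any traversal, $\tau \le \tau^*$. This follows from the heap semantics. Whenever $|MH| < k$ we have $\tau = 0 \le \tau^*$; once $MH$ is full, its $k$ members are distinct valid FDs each of redundancy $\ge \tau$, which exhibits at least $k$ FDs of redundancy $\ge \tau$ and forces the $k$-th largest redundancy to satisfy $\tau^* \ge \tau$. Since a heap entry is replaced only by a strictly larger one, $\tau$ is monotonically non-decreasing, so $\tau \le \tau^*$ holds throughout, for whatever order the FDs happen to be inserted.

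Next I would combine this invariant with monotonicity to show no genuine top-$k$ FD is ever skipped. Take any FD $f = (X \to A)$ with $red(f) \ge \tau^*$, and let $X'$ be any ancestor of $X$ on the set-enumeration path leading to $f$. By Theorem~\ref{thm:monotonicity} the bound is non-increasing under specialization, so
\[
\overline{red}^*(X' \to A) \;\ge\; \overline{red}^*(X \to A) \;\ge\; red(f) \;\ge\; \tau^* \;\ge\; \tau,
\]
where the second inequality is the upper-bound property and the last is the invariant above. Hence at \emph{whatever} moment a strategy evaluates the pruning test at $X'$, the condition $\overline{red}^*(X'\to A) \ge \tau$ holds; no node on the path to $f$ is pruned, and $f$ is therefore visited and validated. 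Because this argument uses only $\tau \le \tau^*$ and the monotone bound, and never the order of visits, it applies verbatim to every admissible traversal, which is precisely the claimed independence. I would then close the correctness chain: every FD with redundancy $\ge \tau^*$ is offered to $MH$, so the final heap retains the $k$ largest among them, the final threshold equals $\tau^*$, and any excluded FD has redundancy at most $\tau^*$, establishing $red(f) \ge red(f')$ for all $f \in \mathcal{F}_k$ and $f' \in \mathcal{F}\setminus\mathcal{F}_k$.

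I expect the main obstacle to be the invariant $\tau \le \tau^*$ holding \emph{uniformly across all threshold trajectories}: different traversal orders raise $\tau$ along different paths, and the argument must show none of them can push $\tau$ above the true $k$-th redundancy; the counting argument via the full heap is what carries this. A secondary subtlety is that validation failures refine $\mathcal{H}'_A$ and reshape the tree; I would note that such refinement only alters which minimal hitting sets exist (correctness inherited from the FDHITS framework) and never removes a valid FD from eventual consideration, so the path-survival argument above is unaffected by it.
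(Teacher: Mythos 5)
Your proposal is correct and follows essentially the same threshold-safety argument as the paper: fix the $k$-th highest true redundancy ($\tau^*$, the paper's $\tau_{final}$), observe that the dynamic threshold never exceeds it, and conclude via the safe upper bound that no genuine top-$k$ FD can ever be pruned, independent of traversal order. Your version is in fact tighter than the paper's in two places it leaves implicit --- the heap-counting justification of the invariant $\tau \le \tau^*$, and the use of Theorem~\ref{thm:monotonicity} to show that every \emph{ancestor} on the enumeration path to a top-$k$ FD also survives the pruning test --- but these are refinements of the same route, not a different one.
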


\begin{proof}
	Let $\mathcal{F}_k$ be the true set of top-$k$ FDs, and let $\tau_{final}$ be the $k$-th highest redundancy count in $\mathcal{F}_{k}$. Since $\tau$ is monotonically non-decreasing during the process, $\forall f \in \mathcal{F}_{k}$, its score satisfies $red(f) \ge \tau_{final} \ge \tau_{current}$ at any time step. By the definition of a safe upper bound, $\overline{red}^*(f) \ge red(f) \ge \tau_{current}$. Consequently, $f$ will never be pruned regardless of the traversal order. Since the strategy guarantees visiting all unpruned nodes, every $f \in \mathcal{F}_{k}$ is eventually validated and included in the result.
\end{proof}

It should be noted that the global best-first search uses more memory due to $m$ search tree maintained, but the speedup is worth it. Since $\tau$ increases faster, SDP can prune more branches early, reducing the total search space. As proved in Theorem \ref{theorem:independence}, global scheduler only changes the traversal order without affecting correctness. The final top-$k$ set is the same as the sequential approach.

\section{Experimental evaluation} \label{sec:performance}

This section evaluates SDP on real datasets with varying dimensions and sizes. The experiments compare SDP with the baseline FDR and include an ablation study to measure the contribution of each optimization. This section first describes the experimental setup, then presents and analyzes the experimental results.

\subsection{Experimantal Setup} \label{sec:exp:setup}

We implement the algorithms in Java (jdk-21\_windows-x64). The experiments are run on a DELL OptiPlex 7010MT workstation configured with Intel(R) Core(TM) i9-13900 CPU @ 2.00GHz (24 cores), 64 GB of main memory, a 4 TB hard disk, and the Windows 11 (64-bit) operating system.

The real-world data sets are collected from the UCI Machine Learning Repository\footnote{http://archive.ics.uci.edu/} and Kaggle\footnote{https://www.kaggle.com/}. For a fair comparison, we allocate 56 GB memory (out of 64 GB total) to each algorithm using the -Xmx parameter.

\textit{NULL semantics}. In real-world data sets, null values are widely used to represent missing or unavailable entries. The interpretation of null values impacts the discovery results. In our experiments, we use the standard NULL-EQ semantics, treating all null values as equal but distinct from any non-null value \cite{DBLP:journals/pvldb/Berti-EquilleHN18,DBLP:journals/pvldb/BirnickBFNPS20}.

We evaluate algorithm performance using five metrics.

\textit{Runtime}. We record the total wall-clock time from the start of discovery until completion on each dataset.

\textit{Peak memory consumption}. We measure the peak memory consumption of SDP and FDR using the MemoryMXBean interface of JVM.

\textit{Search Pruning Ratio (PR$_{S}$)}. This measures pruning in the set-enumeration tree. Let $\#$Nodes(SDP) be the number of nodes visited by SDP and $\#$TotalNodes(FDR) be the number of nodes visited by the exhaustive baseline, and $\mathrm{PR}_S = 1 -\frac{\text{$\#$Nodes(SDP)}}{\text{$\#$TotalNodes(FDR)}}$. It reflects the fraction of search space avoided by SDP.

\textit{Candidate Pruning Ratio ($PR_C$)}. This measures pruning at the candidate validation level. Let $\#$AllCandidates(FDR) be the number of FD candidates considered by FDR and $\#$VerifiedCandidates(SDP) number of candidates validated by SDP, and $\mathrm{PR}_C = 1 -\frac{\text{$\#$VerifiedCandidates(SDP)}}{\text{$\#$AllCandidates(FDR)}}$. This shows how well SDP filters irrelevant candidates before validation.

\textit{Result Utility (LIFT$_k$).} LIFT$_k$ measures the quality of top-k FDs as the ratio of average redundancy of top-$k$ FDs to that of remaining FDs ($\bar{s}_{\text{top}}/\bar{s}_{\text{rest}}$), where $\bar{s}_{\text{top}}=\frac{1}{k}\sum_{f \in \mathcal{F}_k}red(f)$ and $\bar{s}_{\text{rest}}=\frac{1}{|\mathcal{F} \setminus \mathcal{F}_k|}\sum_{f \in \mathcal{F} \setminus \mathcal{F}_k} red(f)$. LIFT$_k > 1$ means the top-$k$ FDs have much higher average redundancy than the remaining FDs.

\begin{table*}[t]
	\centering
	\caption{Experimental results on real-life data sets.}
	\footnotesize
	\setlength{\tabcolsep}{4pt} 
	\begin{tabular}{|c||c|c|c|c||c c|c c||c|c|c|c|}
		\hline
		Dataset & $|r|$ & $|R|$ & \#FDs & \#FDs$_{>0}$ 
		& \multicolumn{2}{c|}{FDR} 
		& \multicolumn{2}{c||}{SDP} 
		& Speedup & PR$_S$ & PR$_{C}$ & LIFT$_k$ \\
		\cline{6-9}
		& & & & 
		& Time[s] & Mem[G] 
		& Time[s] & Mem[G] 
		& & & & \\
		\hline
		ad\_table & 268255 & 24 & 3019 & 2996 & 56.054 & 15.8 & 8.575 & 3.5 & 6.537 & 0.831 & 0.872 & 7.511 \\
		
		adult & 32561 & 15 & 78 & 78 & 0.314 & 0.093 & 0.284 & 0.093 & 1.106 & 0.0196 & 0.0672 & 18.897 \\
		
		airbnb\_open\_data & 102599 & 26 & 20149 & 12064 & 76.96 & 16.615 & 38.257 & 8.062 & 2.012 & 0.835 & 0.850 & 10.299 \\
		
		airline\_dataset & 98619 & 15 & 864 & 164 & 1.623 & 2.344 & 1.44 & 2.313 & 1.127 & 0.594 & 0.773 & 199.646 \\		
		
		airlines\_flights & 300153 & 12 & 43 & 31 & 1.282 & 0.584 & 0.995 & 0.474 & 1.288 & 0.266 & 0.451 & 3.616 \\
		
		aggnycyellowtaxi & 921373 & 18 & 2261 & 2189 & 140.529 & 5.375 & 118.771 & 8.563 & 1.183 & 0.348 & 0.414 & 18.009 \\
		
		atom & 160000 & 31 & 7273 & 5135 & 43.173 & 5.5 & 0.978 & 0.519 & 44.144 & 0.968 & 0.990 & 11.401 \\
		
		avila & 10430 & 11 & 336 & 79 & 0.156 & 0.093 & 0.146 & 0.093 & 1.069 & 0.676 & 0.817 & 118.439 \\
		
		bank\_trans & 1048567 & 9 & 59 & 26 & 4.695 & 1.125 & 3.701 & 1.438 & 1.269 & 0.339 & 0.443 & 361.391 \\		
		
		blerssi & 5191 & 15 & 14 & 1 & 0.095 & 0.005 & 0.046 & 0.0052 & 2.065 & 0.086 & 0.6 & $\infty$ \\
		
		carbon\_nanotubes & 10721 & 8 & 72 & 24 & 0.115 & 0.0313 & 0.078 & 0.0625 & 1.474 & 0.407 & 0.542 & 14.944 \\
		
		census & 199523 & 42 & - & - & -- & OM & 21.213 & 6.362 & -- & -- & -- & -- \\
		
		conflongdemo\_jsi & 164860 & 8 & 51 & 5 & 0.268 & 0.158 & 0.188 & 0.156 & 1.426 & 0.323 & 0.522 & $\infty$ \\
		
		covtype & 581012 & 55 & -- & -- & -- & OM & 53.27 & 24.104 & -- & -- & -- & -- \\
		
		creditcard\_2023 & 568630 & 31 & 814 & 784 & 3.129 & 3.967 & 2.771 & 4.031 & 1.129 & 0.901 & 0.985 & 1.039 \\
		
		ditag\_feature & 3960124 & 13 & 58 & 38 & 17.534 & 8.5 & 12.56 & 7.767 & 1.396 & 0.4 & 0.658 & 16.559 \\
		
		eshopclothing & 165474 & 14 & 22 & 14 & 0.582 & 0.178 & 0.448 & 0.151 & 1.299 & 0.243 & 0.386 & 5.124 \\		
		
		fd-reduced-30 & 250000 & 30 & 89571 & 23 & 49.26 & 5.094 & 20.459 & 4.937 & 2.407 & 0.948 & 0.958 & 7027.556 \\
		
		flight & 1000 & 109 & 982631 & 734583 & 195.179 & 2.5 & 0.166 & 0.093 & 1175.8 & 0.9999 & 0.99999 & 21.167 \\
		
		ftda & 13094522 & 11 & 57 & 47 & 536.898 & 25.843 & 68.204 & 18.625 & 7.872 & 0.137 & 0.341 & 172.586 \\
		
		gendergap & 4746 & 21 & 5520 & 3162 & 0.866 & 1.156 & 0.447 & 0.25 & 1.937 & 0.771 & 0.809 & 202.858 \\
		
		global\_health\_stat & 1000000 & 22 & 91658 & 3480 & 886.744 & 5.875 & 864.045 & 10.656 & 1.026 & 0.701 & 0.839 & 128.968 \\
		
		grocery\_sales & 6758125 & 9 & 17 & 3 & 9.788 & 7.906 & 9.795 & 8.134 & 0.999 & 0.363 & 0.5 & $\infty$ \\		
		
		har70plus & 2259597 & 8 & 9 & 2 & 6.013 & 1.581 & 4.372 & 1.596 & 1.375 & 0.25 & 0.478 & $\infty$ \\
		
		harth & 5603043 & 8 & 58 & 18 & 12.26 & 8.005 & 12.808 & 9.531 & 0.957 & 0.196 & 0.396 & 47043.7 \\
		
		healthtracking & 365000 & 12 & 96 & 2 & 0.945 & 0.649 & 0.607 & 0.546 & 1.556 & 0.64 & 0.779 & $\infty$ \\
		
		homec & 503911 & 32 & -- & -- & -- & OM & 140.398 & 18.046 & -- & -- & -- & -- \\
		
		horse & 300 & 28 & 128726 & 128565 & 23.494 & 1.375 & 5.608 & 0.718 & 4.189 & 0.735 & 0.744 & 13.087 \\
		
		hr\_mnc\_data & 2000000 & 13 & 251 & 32 & 20.382 & 18.781 & 20.179 & 14.125 & 1.01 & 0.495 & 0.668 & 3208927 \\
		
		ht\_sensor & 928991 & 12 & 949 & 467 & 23.613 & 8.531 & 28.378 & 8.718 & 0.832 & 0.459 & 0.598 & 2.174 \\
		
		internetfirewall & 65532 & 12 & 88 & 88 & 0.754 & 0.156 & 0.251 & 0.0732 & 3.004 & 0.508 & 0.695 & 2.578 \\
		
		lung\_cancer & 890000 & 17 & 2120 & 984 & 102.881 & 5.187 & 77.09 & 8.781 & 1.334 & 0.410 & 0.531 & 15.869 \\
		
		metropt3 & 1516948 & 17 & 345 & 313 & 69.645 & 20.687 & 58.403 & 12.299 & 1.192 & 0.203 & 0.318 & 2.767 \\
		
		miningprocess & 737453 & 24 & 38125 & 38125 & 567.004 & 11.031 & 458.145 & 11.594 & 1.237 & 0.652 & 0.705 & 217.339 \\
		
		movies\_dataset & 999999 & 17 & 4073 & 91 & 31.219 & 12.719 & 23.299 & 13.718 & 1.339 & 0.638 & 0.792 & 333969 \\
		
		openfda\_drug & 1000 & 39 & 7789 & 7502 & 0.94 & 0.312 & 0.084 & 0.0312 & 11.191 & 0.991 & 0.998 & 4.534 \\
		
		reactionnetwork & 65554 & 29 & 42960 & 42954 & 137.882 & 44.888 & 1.669 & 0.764 & 82.614 & 0.983 & 0.991 & 1.347 \\
		
		rideshare\_kaggle & 693071 & 57 & -- & -- & -- & OM & 129.544
		& 36.171 & -- & -- & -- & -- \\
		
		sg\_bioentry & 184292 & 9 & 19 & 4 & 0.407 & 0.175 & 0.278 & 0.181 & 1.464 & 0.667 & 0.75 & $\infty$ \\
		
		superconductivty & 21263 & 82 & -- & -- & -- & OM & 0.952 & 0.2897 & -- & -- & -- & -- \\
		
		tnweather\_1.8m & 1888422 & 14 & 103 & 40 & 11.433 & 10.781 & 9.268 & 6.531 & 1.233 & 0.289 & 0.5026 & 428306 \\		
		
		wecsdataset & 287999 & 49 & 61207 & 61207 & 85.397 & 42.94 & 27.261 & 9.963 & 3.133 & 0.899 & 0.959 & 1.0004 \\
		
		youtube\_rec & 1000000 & 14 & 328 & 124 & 16.286 & 6.593 & 15.414 & 8.593 & 1.056 & 0.186 & 0.473 & 59941 \\
		
		zomato & 51717 & 17 & 618 & 532 & 1.621 & 2.3125 & 1.322 & 2.343 & 1.226 & 0.374 & 0.491 & 17.204 \\		
		
		\hline
	\end{tabular}
	\label{table:overallPerformance}
\end{table*}

\subsection{Overall Performance Comparison}

Table~\ref{table:overallPerformance} summarizes the experimental results of SDP and FDR on over 40 real-world datasets from public repositories. The datasets have 8 to 109 attributes and hundreds to tens of millions of tuples, covering a wide range of sizes.

Each algorithm is executed under the same 56 GB memory constraint. If memory consumption exceeds this limit, the case is marked as OM (OutOfMemoryError). We use $k$ = 10 for all datasets and algorithms, a typical choice for top-$k$ discovery. The sensitivity of different $k$ values is discussed in Section~\ref{sec:exp:sensitivityK}.

SDP is much faster than FDR on high-dimensional and large-scale datasets, achieving 10-1000$\times$ speedup while using less memory. For example, on \textit{flight} and \textit{reactionnetwork}, SDP achieves speedups of 1175$\times$ and 82$\times$, respectively. This is because $\tau$ increases quickly. The global scheduler of SDP finds high-redundancy FDs first, making $\tau$ increase faster than sequential search in FDR. Using the monotonic upper bound, SDP then prunes many branches early. The search pruning ratio (PR$_S$) shows that, it exceeds 0.99 on datasets like \textit{flight} and \textit{openfda\_drug}, meaning SDP visits less than 1$\%$ of the nodes FDR requires.

FDR runs out of memory on five datasets: \textit{census}, \textit{covtype}, \textit{homec}, \textit{rideshare\_kaggle}, and \textit{superconductivity}. This happens because FDR materializes partitions for all candidates, which uses too much memory on high-dimensional data. SDP handles all these datasets by pruning early. It skips candidates before materializing their partitions. The candidate pruning ratio (PR$_C$) often exceeds 0.95 on wide datasets, showing SDP skips most low-redundancy candidates. This keeps memory usage low, which depends on the number of high-redundancy FDs found, not the total schema size.

LIFT$_k$ also shows that redundancy is an effective metric. On some datasets like \textit{conflongdemo\_jsi} and \textit{har70plus}, LIFT$_k$ reaches infinity, meaning only the top-$k$ FDs have non-trivial redundancy while the rest have near-zero redundancy. This shows redundancy effectively distinguishes important FDs from low-redundancy ones. By focusing on redundancy, SDP finds the most useful FDs efficiently.

\textit{Performance overhead in low-complexity scenarios.} SDP achieves substantial speedups on high-dimensional datasets, but comparable to or slightly slower than FDR on some low-dimensional datasets (e.g., \textit{ht\_sensor}, \textit{harth}, \textit{grocery\_sales}). For instance, on \textit{ht\_sensor} ($|R|=12$), SDP is about 4.7 seconds slower than FDR. This happens because the optimizations of SDP (e.g., pre-computing PCM and maintaining the global priority queue) add overhead. When the lattice is small and exhaustive search finishes in seconds, this overhead becomes noticeable. However, this small overhead is acceptable. SDP avoids the failures FDR encounters on complex datasets like \textit{flight} and \textit{superconductivity}.

\begin{figure}
	\centering
	\renewcommand{\thesubfigure}{}
	\subfigure[(a) Execution time]{
		\includegraphics[scale = 0.63]{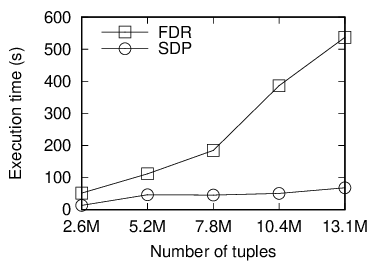}}
	\subfigure[(b) Peak memory usage]{
		\includegraphics[scale = 0.63]{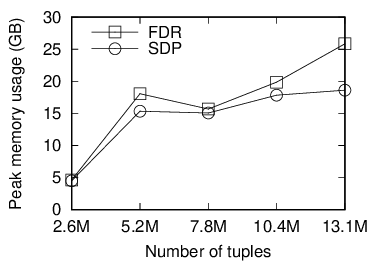}}
	\subfigure[(c) Number of generated nodes]{
		\includegraphics[scale = 0.63]{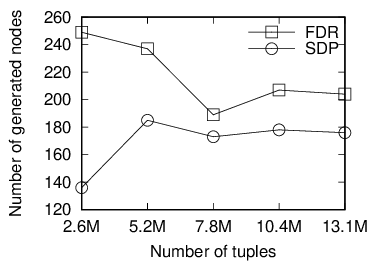}}
	\subfigure[(d) Validation number of FDs]{
	\includegraphics[scale = 0.63]{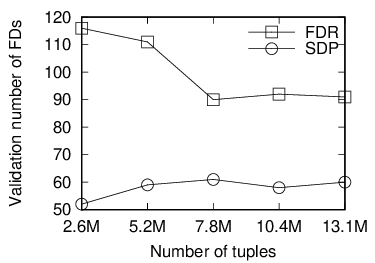}}
	\caption{The illustration of scalability analysis with tuple numbers (ftda data set).}
	\label{fig:scalability_tuples}
\end{figure}

\subsection{Scalability analysis with tuple numbers}

We evaluate the scalability of SDP as the number of tuples increases. We use the largest dataset, \textit{ftda}, which contains over 13 million tuples and 11 attributes. We create prefix subsets of \textit{ftda} at 20$\%$, 40$\%$, 60$\%$, 80$\%$, and 100$\%$ of its original size. This shows how performance degrades as data grows while keeping data characteristics similar. With $k$ = 10, we run both SDP and FDR on each subset and measure their runtime.

Figure \ref{fig:scalability_tuples} shows the scalability of SDP on \textit{ftda} as data grows from 2.6 million to 13.1 million tuples. We analyze the results in three aspects.

\textit{Runtime Efficiency.} Figure \ref{fig:scalability_tuples}(a) shows the runtime of FDR scales linearly with data size, while SDP grows sub-linearly, achieving 7.87$\times$ speedup at 13.1M tuples. This is because our optimizations work together effectively. Attribute ordering and global search make $\tau$ increase quickly, while PCM tightens the upper bound. Together, these reduce the cost of partition refinement. Even though verifying a single candidate costs more as data grows, SDP reduces the number of candidates needing verification, keeping total runtime low.

\textit{Why Node Count of FDR Decreases.} Figures \ref{fig:scalability_tuples}(c) and (d) show an interesting pattern. The node count of FDR decreases as data grows from 2.6M to 7.8M tuples, then stabilizes. This happens because of data distribution convergence. In smaller samples (2.6M), spurious dependencies often appear valid, forcing FDR to explore deeper into the lattice. As data grows to 7.8M, more counter-examples emerge, causing invalid FDs to be pruned earlier in the search tree. In contrast, SDP remains stable, validating a constant 50-60 candidates. This shows the upper bound of SDP works well regardless of data size. Unlike FD validity (which is sensitive to counter-examples), the top-$k$ high-redundancy FDs remain the same as data grows, so SDP maintains consistent pruning.

\textit{Memory Usage.} Figure \ref{fig:scalability_tuples}(b) shows SDP uses less memory than FDR, saving 7 GB at 13.1M tuples. The memory usage of SDP depends on two factors. First, the global search adds some overhead by maintaining execution states (stacks and priority queues) for $m$ concurrent trees. Second, aggressive pruning prevents materializing many intermediate partitions. The results show the memory saved by pruning is much larger than the overhead, so SDP avoids the out-of-memory failures that affect FDR.

\begin{figure}
	\centering
	\renewcommand{\thesubfigure}{}
	\subfigure[(a) Execution time]{
		\includegraphics[scale = 0.63]{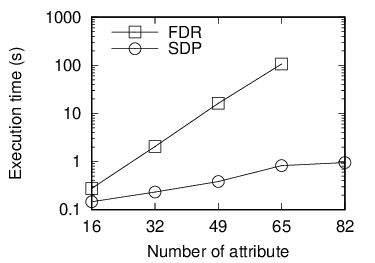}}
	\subfigure[(b) Peak memory usage]{
		\includegraphics[scale = 0.63]{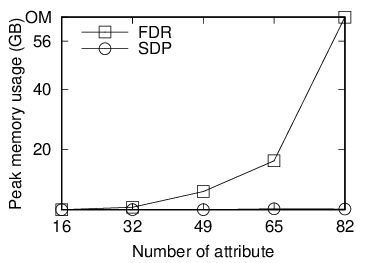}}
	\subfigure[(c) Number of generated nodes]{
		\includegraphics[scale = 0.63]{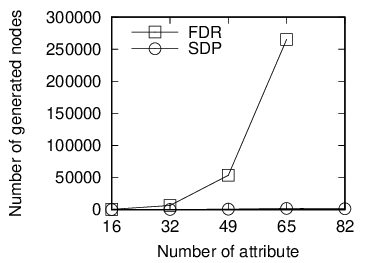}}
	\subfigure[(d) Validation number of FDs]{
		\includegraphics[scale = 0.63]{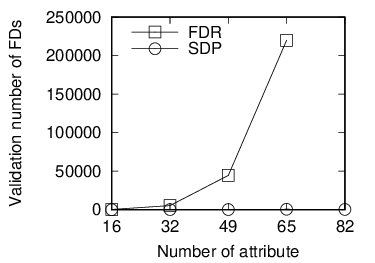}}
	\caption{The illustration of scalability analysis with attribute numbers (superconductivity data set).}
	\label{fig:scalability_attributes}
\end{figure}

\subsection{Scalability analysis with attribute numbers}

We evaluate how SDP scales as the number of attributes increases. This is challenging because the search space grows exponentially with attributes. We use the \textit{superconductivity} dataset (82 attributes), one of our widest datasets. We create subsets containing 20$\%$, 40$\%$, 60$\%$, 80$\%$, and 100$\%$ of the attributes. With $k$ = 10, we run both SDP and FDR on each subset. Figure \ref{fig:scalability_attributes} shows how each algorithm scales with attributes.

\textit{Why FDR Fails.} Figure \ref{fig:scalability_attributes}(a) shows the runtime of FDR grows much faster than SDP as attributes increase from 16 to 82. The runtime of FDR grows exponentially and runs out of memory at 82 attributes. This happens because the search space grows exponentially ($2^{|R|}$). Each added attribute doubles the search space, causing the number of nodes to grow from 419 (at 16 attributes) to 2.65 million (at 65 attributes). In contrast, the runtime of SDP grows much more slowly. This is because its optimizations reduce the impact of combinatorial explosion. Attribute ordering and global search make $\tau$ increase quickly, while PCM tightens the upper bound. Together, these help SDP search only high-redundancy candidates, avoiding the exponential growth of low-redundancy combinations.

\textit{Memory Usage.} Figure \ref{fig:scalability_attributes}(b) shows why FDR fails at 82 attributes. The memory usage of FDR surges from 0.09 GB at 16 attributes to 16.3 GB at 65 attributes, then exceeds the 56 GB limit at 82 attributes. This happens because FDR materializes partitions for all candidates to verify FDs. In contrast, SDP uses much less memory, staying below 0.38 GB. This is because the pruning of SDP determines most candidates cannot enter the top-$k$ results, so SDP never allocates memory for them. This avoids memory failures even with many attributes.

\textit{Why SDP Validates Fewer Candidates.} Figures \ref{fig:scalability_attributes}(c) and (d) show why SDP is faster. As attributes increase from 16 to 65, FDR validates 329 to 219,855 FDs. In contrast, SDP validates only 31 to 246 FDs. Even at 82 attributes, SDP generates 1,518 nodes and validates only 246 candidates. This shows that while the total number of FDs grows exponentially, the number of candidates SDP needs to validate grows slowly. PCM helps SDP achieve this. As dimensions increase, the pairwise information in PCM helps SDP prune candidates earlier. This shows SDP finds the top-$k$ high-redundancy FDs while skipping hundreds of thousands of low-redundancy FDs.

\begin{figure}
	\centering
	\renewcommand{\thesubfigure}{}
	\subfigure[(a) Execution time]{
		\includegraphics[scale = 0.63]{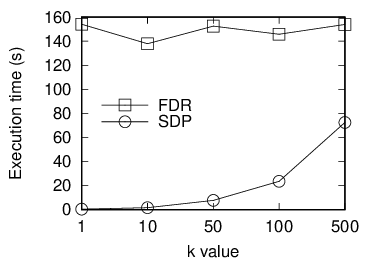}}
	\subfigure[(b) Peak memory usage]{
		\includegraphics[scale = 0.63]{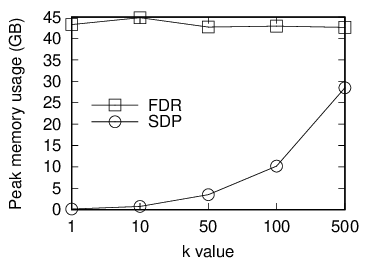}}
	\subfigure[(c) Number of generated nodes]{
		\includegraphics[scale = 0.63]{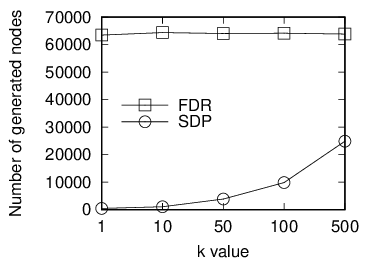}}
	\subfigure[(d) Validation number of FDs]{
		\includegraphics[scale = 0.63]{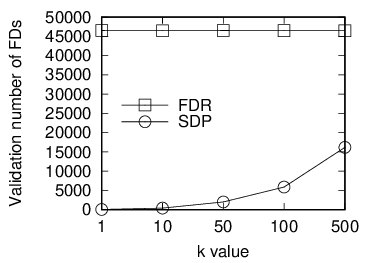}}
	\caption{The illustration of impact of \textit{k} (ReactionNetwork data set).}
	\label{fig:scalability_k}
\end{figure}

\subsection{Parameter Sensitivity to $k$} \label{sec:exp:sensitivityK}

We investigate how the performance of SDP changes with $k$, the number of top FDs to retrieve. We use the \textit{ReactionNetwork} dataset, which contains 42,954 FDs with non-zero redundancy. This allows us to test SDP with different $k$ values. We run SDP with $k\in$$\{$1, 10, 50, 100, 500$\}$, ranging from the single most redundant FD to 500 FDs. We measure runtime for each $k$. Figure \ref{fig:scalability_k} shows SDP and FDR behave differently as $k$ changes.

\textit{Why FDR is Insensitive to $k$.} Figure \ref{fig:scalability_k} shows the performance of FDR is almost flat and insensitive to $k$. The runtime of FDR stays around 145-155 seconds, and memory usage stays between 42.6 GB and 44.9 GB. This is because FDR uses a two-phase approach. It first discovers all valid FDs, then selects the top-$k$ results. So the cost of FDR depends only on the dataset, not $k$. The small variations come from system randomness and hypergraph sampling, not $k$.

\textit{Why SDP is Sensitive to $k$.} In contrast, the performance of SDP increases with $k$. As $k$ increases from 1 to 500, its runtime grows from 0.43s to 72.45s, and the number of generated nodes grows from 442 to 24,000. This happens because $k$ affects $\tau$. The threshold $\tau$ is the redundancy count of the $k$-th best FD found so far. When $k$ is small (e.g., $k$ = 1), $\tau$ quickly reaches a high value, so pruning is very aggressive. SDP prunes almost all candidates, validating only 53. However, as $k$ increases to 500, $\tau$ becomes lower. A lower $\tau$ means more candidates pass the upper bound check, so SDP must explore more candidates, leading to more nodes and validations.

\textit{Memory Usage.} Figure \ref{fig:scalability_k}(b) shows the memory consumption of SDP increases with $k$. Memory usage rises from 0.14 GB ($k$ = 1) to 28.49 GB ($k$ = 500). This is because when $k$ is small, $\tau$ is high, so SDP prunes most candidates before creating their partitions (only 1$\%$ pass). When $k$ is large, $\tau$ is lower, so more candidates pass the upper bound check, and SDP must materialize more partitions. However, even at $k$ = 500, SDP (28.5 GB) uses much less memory than FDR (44 GB), showing that the pruning of SDP is effective even with large $k$.

\begin{table}[htbp]
	\centering
	\caption{Ablation study of optimization components in SDP}
	\setlength{\tabcolsep}{2pt} 
	\begin{tabular}{|c|c|c|c|c|c|c|}
		\hline
		\textbf{Dataset} & \textbf{variant} & \textbf{Time[s]} & \textbf{Mem[G]} & \#TRN & \#VAN & \#FD$_{g}$ \\
		\hline
		\multirow{4}{*}{atom}  %
		& SDP-core & 15.836 & 3.1 & 2746 & 1408 & 1260 \\
		& SDP+O &  4.991 & 3.1 & 534 & 247 & 192 \\
		& SDP+P & 4.912 & 1.6 & 462 & 218 & 181 \\
		& SDP (full) & 0.978 & 0.5 & 379 & 90 & 50 \\
		\hline
		\multirow{4}{*}{census}  
		& SDP-core & - & OM & - & - & - \\
		& SDP+O & 315.055 & 30.9 & 25089 & 17879 & 17655 \\
		& SDP+P & 355.054 & 29.8 & 25447 & 17885 & 17655 \\
		& SDP (full) & 21.213 & 6.4 & 2548 & 1135 & 1066 \\
		\hline
		\multirow{4}{*}{ReactionNetwork}  
		& SDP-core & 81.792 & 32.4 & 27829 & 17736 & 15586 \\
		& SDP+O & 8.239 & 4.6 & 3892 & 2600 & 2293 \\
		& SDP+P & 8.514 & 4.5 & 3791 & 2558 & 2263 \\
		& SDP (full) & 2.849 & 1.3 & 1688 & 724 & 523 \\
		\hline
	\end{tabular}

	\label{tab:ablation_table}
\end{table}

\begin{figure*}
	\centering
	\renewcommand{\thesubfigure}{}
	\subfigure[(a) atom data set]{
		\includegraphics[scale = 0.45]{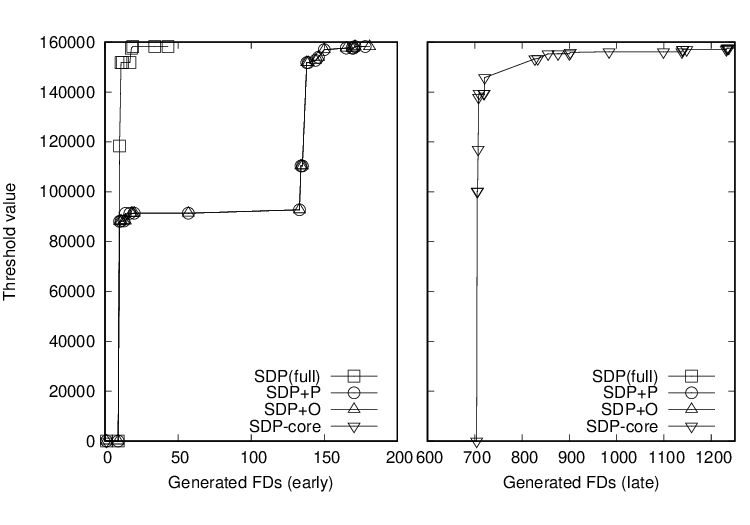}}
	\subfigure[(b) census data set]{
		\includegraphics[scale = 0.45]{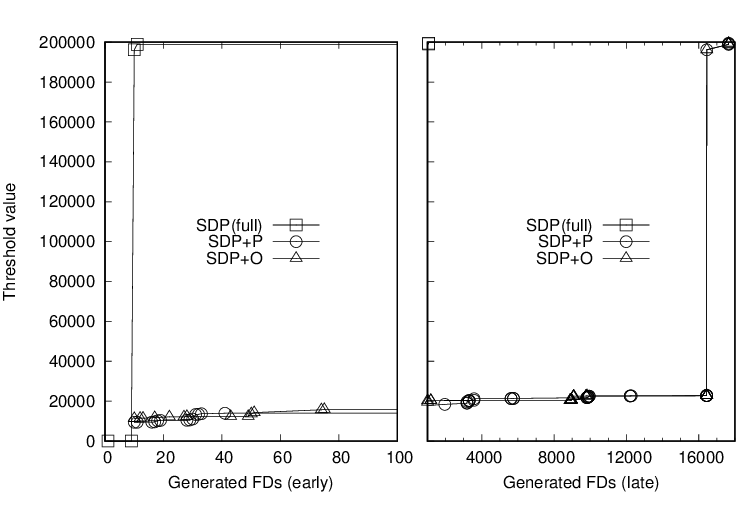}}
	\subfigure[(c) reactionnetwork data set]{
		\includegraphics[scale = 0.45]{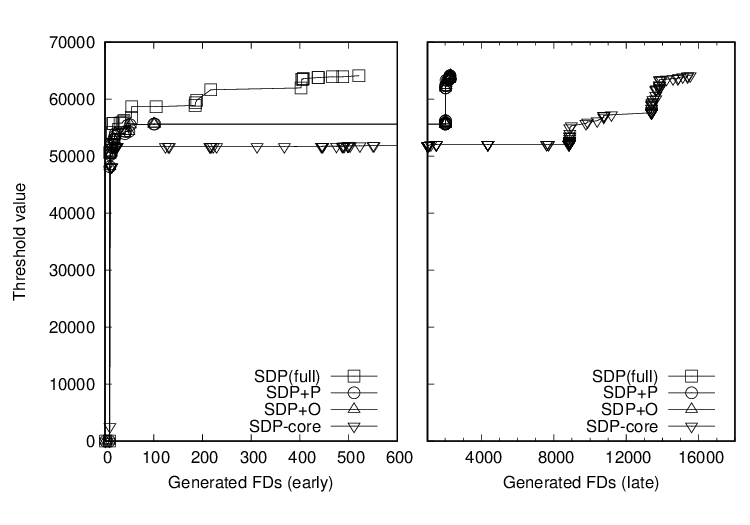}}
	\caption{The illustration of threshold value for different optimization strategies.}
	\label{fig:ablation_curves}
\end{figure*}

\subsection{Ablation study of optimization strategies}
\label{sec:exp:ablation}

We conduct a detailed ablation study to evaluate each of three optimization strategies. To isolate the effect of each component, we compare the performance of all four variants of our algorithm. 

\begin{itemize}
	\item SDP-core: The basic version of SDP, which includes the unified pruning rule but processes RHS attributes in a sequential order and does not include any of the advanced optimizations.
	\item SDP+O: This version enhances SDP-core by adding our first optimization, the heuristic ordering strategy.
	\item SDP+P: This version builds upon SDP+O by also adding our second optimization, the tighter upper bound estimation with pairwise attribute information from the PCM.
	\item SDP (full): The complete version of SDP, which replaces the static RHS processing order with our third optimization, the globally-guided search strategy.
\end{itemize}

We run experiments on three datasets (\textit{atom}, \textit{census}, and \textit{ReactionNetwork}) with $k$ = 10. Table \ref{tab:ablation_table} shows the results. We measure runtime (Time[s]), peak memory (Mem[G]), number of nodes generated (\#TRN), number of validations performed (\#VAN), and FDs generated before finding top-$k$ results (\#FDg).

\subsubsection{Impact of heuristic ordering}

SDP-core and SDP+O show that attribute ordering is critical. Table~\ref{tab:ablation_table} shows SDP-core fails with out-of-memory on \textit{census} and takes 81.792s on \textit{ReactionNetwork}. Heuristic ordering (SDP+O) fixes the memory failure and reduces runtime to 8.239s (10$\times$ speedup).

This is because heuristic ordering helps increase $\tau$ faster. SDP-core processes attributes in random order, so it may explore low-redundancy branches first and $\tau$ increases slowly. A low $\tau$ cannot prune effectively. In contrast, SDP+O prioritizes low-cardinality attributes. This helps find high-redundancy FDs quickly, making $\tau$ increase faster. Figure \ref{fig:ablation_curves}(c) confirms this. The $\tau$ value of SDP+O rises much earlier than that of SDP-core, so SDP+O prunes more aggressively and avoids exploring low-redundancy branches.

\subsubsection{Impact of PCM-guided pruning}

We compare SDP+O and SDP+P to show the impact of PCM. Runtime improvement is small in some cases (e.g., \textit{atom}: from 4.991s to 4.912s), but memory usage drops significantly. On \textit{atom}, peak memory drops by 50$\%$, from 3.1 GB to 1.6 GB.

This is because PCM provides tighter upper bounds. Heuristic ordering increases $\tau$, but the algorithm still needs tight upper bounds to prune effectively. SDP+O uses loose single-attribute bounds. SDP+P uses PCM to compute tighter upper bounds based on pairwise attribute information. So SDP+P detects invalid branches earlier (at smaller $|X|$) and prunes candidates without materializing their partitions. Although computing PCM adds slight overhead (explaining the small runtime increase on \textit{census}), the memory savings show PCM is effective.

\subsubsection{Impact of global search strategy}

SDP (full) adds global search and achieves the largest speedup. On \textit{census}, runtime drops from 355.054s (SDP+P) to 21.213s, and nodes decrease from 25,000 to 2,548. On \textit{atom}, runtime drops from 4.9s to 0.978s.

This is because global search shares $\tau$ across all RHS attributes. In sequential approaches, the $m$ search trees are isolated. A high $\tau$ found in the last tree cannot help earlier trees that have already finished. SDP (full) uses a global priority queue to schedule all $m$ search trees. When any tree finds a high-redundancy FD, $\tau$ updates immediately and prunes all other trees. Figure \ref{fig:ablation_curves}(a) and (b) show the $\tau$ value of SDP (full) reaches its maximum almost immediately, so SDP prunes other attributes early. This reduces node generation significantly.

\begin{table*}[htbp]
	\centering
	\caption{Representative Top-$k$ FDs and Their Semantic Significance Across Diverse Domains}
	\setlength{\tabcolsep}{4pt} 
	\begin{tabular}{|c|c|c|l|c|c|c|}
		\hline
		\textbf{Dataset} & $|r|$ & \textbf{Rank} & \multicolumn{1}{c|}{\textbf{top-k FDs}} & \textbf{\#red} & \textbf{LIFT$_k$} & \textbf{Semantic Significance} \\
		\hline
		movies\_dataset & 999999 & 1 & ReleaseDate $\rightarrow$ ReleaseYear & 972242 & 725750 & Natural temporal hierarchy \\
		\hline
		
		\multirow{2}{*}{tnweather\_1.8m} & \multirow{2}{*}{1888422}
		& 1 & precipitation $\rightarrow$ rain & 1888120 & \multirow{2}{*}{627303} & \multirow{2}{*}{Attribute synonymy equivalence}\\
		& & 2 & rain $\rightarrow$ precipitation & 1888120 &  & \\
		\hline
		
		\multirow{2}{*}{youtube\_rec} & \multirow{2}{*}{1000000} 
		& 1 & \{watch\_time, watch\_percent\} $\rightarrow$ video\_duration & 509623 & \multirow{2}{*}{156585} & \multirow{2}{*}{Calculated mathematical invariants}\\
		& & 2 & \{video\_duration, watch\_time\} $\rightarrow$ watch\_percent & 509623 & & \\
		\hline
		
		\multirow{2}{*}{hr\_mnc\_data} & \multirow{2}{*}{2000000} 
		& 1 & Job\_Title $\rightarrow$ Department & 1999971 & \multirow{2}{*}{9209523} & \multirow{2}{*}{Organizational business policies} \\
		& & 2 & Hire\_Date $\rightarrow$ Status & 1994521 & & \\
		\hline
		
		\multirow{3}{*}{gendergap} & \multirow{3}{*}{4746} 
		& 1 & \{C\_api, C\_man\} $\rightarrow$ gender & 4739 & \multirow{3}{*}{168} & \multirow{3}{*}{Survey metadata indexing logic} \\
		& & 2 & weightIJ $\rightarrow$ E\_NEds & 4730 & & \\
		& & 3 & NIJ $\rightarrow$ E\_NEds & 4730 & & \\
		\hline
	\end{tabular}
	\label{tab:case_study_summary}
\end{table*}

\subsection{Case Study}

We select five representative datasets for qualitative analysis (Table \ref{tab:case_study_summary}). These datasets come from different domains (e.g., spatial, medical, scientific) and have different sizes ($10^3$ to $10^6$ tuples). Redundancy helps identify the most important FDs in each domain.

Due to space constraints, Table \ref{tab:case_study_summary} shows 1-3 representative top-$k$ FDs for each dataset. These examples cover different types of dependencies (e.g., symmetric equivalence, hierarchical structures, business logic).

\textit{1) movies\_dataset}. The top-1 FD (ReleaseDate $\rightarrow$ ReleaseYear) shows a temporal hierarchy. With redundancy count 972,242 and LIFT$_k$ 725,750, this FD shows ReleaseYear is determined by ReleaseDate across nearly the entire dataset. This suggests normalizing the schema to save storage.

\textit{2) tnweather\_1.8m}. The mutual dependencies (precipitation $\leftrightarrow$ rain) both have redundancy count 1,888,120, nearly equal to the dataset size (1,888,422). This shows these columns represent the same information (attribute synonyms). This suggests merging or removing one column to reduce redundancy.

\textit{3) youtube\_rec}. The top-ranked FDs between watch time, duration, and percentage (LIFT$_k$: 156,585) show mathematical relationships in the logging system. These dependencies can be used to validate data integrity. Any violations indicate errors in calculation logic or data ingestion.

\textit{4) hr\_mnc\_data}. This dataset has the highest LIFT$_k$ ($>$9.2 million) in our study. The FDs (e.g., Job$\_$Title $\rightarrow$ Department) show organizational policies: job titles determine departments. These FDs can be used for role-based access control and compliance auditing.

\textit{5) gendergap}. Although this dataset is small (4,746 tuples), the FDs show how survey indices map to gender. The mapping from categorical indices (C$\_$api, C$\_$man) to gender has redundancy 4,739, holding for 99.8$\%$ of tuples. This shows demographic categories are well-defined. The 7 violating tuples (0.2$\%$) can help identify potential errors in demographic fields.

\section{Conclusion} \label{sec:conclusion}

In this paper, we tackle the problem of information overload in functional dependency (FD) discovery. Instead of finding all FDs, we propose SDP (\textit{Selective-Discovery-and-Prune}) to find the top-$k$ FDs. We use redundancy count as a metric to rank FDs. Our key contribution is a tight, monotonic upper bound that enables exact discovery while pruning most of the search space. We also propose three optimizations: heuristic attribute ordering, PCM-guided pruning, and global scheduling that coordinates multiple searches. Experiments on over 40 datasets show SDP achieves 10-1000× speedup on high-dimensional and large-scale data while maintaining exact top-$k$ results. Case studies show the top-$k$ FDs reveal meaningful patterns useful for data management.

\section*{Acknowledgments}
This work was supported in part by National Natural Science Foundation of China grant no. 62402135, U21A20513, Taishan Scholars Program of Shandong Province grant no. tsqn202211091, Shandong Provincial Natural Science Foundation grant no. ZR2023QF059.

\bibliographystyle{IEEEtran}
\bibliography{IEEEexample}




\vfill

\end{document}